\newcommand{\RomNum}[1]{%
  \textup{\uppercase\expandafter{\romannumeral#1}}%
}
\pgfplotsset{compat=1.17}
\definecolor{darkblue}{rgb}{0, 0, 0.5}
\DeclareMathOperator*{\argmax}{argmax}
\DeclareMathOperator*{\argmin}{argmin}
\newtheorem{proposition}{Proposition}
\newtheorem{theorem}{Theorem}
\newtheorem{lemma}{Lemma}
\newtheorem{assumption}{Assumption}
\title{Grenander-type Density Estimation under Myerson Regularity} 
\author{Haitian Xie \\ Guanghua School of Management \\ Peking University \\ xht@gsm.pku.edu.cn}
\date{\textbf{\today}}
\begin{document}

\maketitle

\begin{abstract} 
\doublespacing
    
    This study presents a novel approach to the density estimation of private values from second-price auctions, diverging from the conventional use of smoothing-based estimators. We introduce a Grenander-type estimator, constructed based on a shape restriction in the form of a convexity constraint. This constraint corresponds to the renowned Myerson regularity condition in auction theory, which is equivalent to the concavity of the revenue function for selling the auction item. Our estimator is nonparametric and does not require any tuning parameters. Under mild assumptions, we establish the cube-root consistency and show that the estimator asymptotically follows the scaled Chernoff's distribution. Moreover, we demonstrate that the estimator achieves the minimax optimal convergence rate.

    \bigskip
    \noindent%
{\bf Keywords:} Nonparametric density estimation, Convexity constraint, Tuning-paramter-free, Chernoff's distribution, Minimax optimality.
\end{abstract}

\newpage

\section{Introduction}




The estimation of the density of private valuations (also referred to as willingness-to-pay) is an important topic of econometrics and industrial organization. This paper focuses on estimating the density function, using an independent and identically distributed (iid) sample of valuations. These valuations are observed from truth-revealing mechanisms such as second-price auctions on digital platforms including eBay, Google Ads, and Meta Ad Auction.

Conventionally, nonparametric procedures involve the use of smoothing-based estimators such as the kernel density estimator and the local polynomial estimator. However, these estimators present three primary challenges. First, the researcher must determine a tuning parameter (bandwidth), which significantly influences the estimator's performance. This optimization is a complex task often requiring the estimation of higher-order derivatives of the density function. Second, inference procedure often requires undersmoothing because the asymptotic normal distribution achieved under the optimal convergence rate has a nonnegligible bias term. Third, computation becomes complex as an estimate needs to be computed for each value within the density function's domain.

To address these issues, this paper utilizes a shape restriction on the valuation distribution, which is more naturally suited to the economic setting than the smoothness condition. We propose a nonparametric density estimator that eliminates the need for any tuning parameter. The asymptotic distribution under the optimal convergence rate is Chernoff's distribution, which is centered at zero. This estimator is computationally more efficient as it is defined piecewise.

The shape restriction we focus on is the convexity of $(1-F)^{-1}$, where $F$ is the cumulative distribution function of the valuations. This convexity constraint is equivalent to the well-known \cite{myerson1981optimal} regularity condition, which states that the virtual valuation function needs to be increasing. As pointed out by \cite{bulow1996auctions}, Myerson regularity is also equivalent to the concavity of the revenue function from selling the item.

The asymptotic properties of our proposed estimator is nonstandard. Under mild assumptions, we derive the cube-root consistency and the non-normal asymptotic distribution of the estimator. We also demonstrate that the cube-root convergence rate is minimax optimal under the set of assumptions under consideration.

The literature on valuation density estimation has explored nonparametric methods under shape restrictions  \citep{HENDERSON2012empirical,luo2018integrated,ma2021monotonicity,pinkse2019estimation}. These papers primarily focus on first-price auctions, in which the private values are not directly observed and the monotonicity constraints they consider are different from Myerson regularity. To our knowledge, this study is the first to leverage Myerson regularity as a shape constraint in nonparametric estimation.

This paper is related to the literature on nonparametric estimation under shape restrictions. The common theme of this literature is to estimate a monotone or convex (resp. concave) function with a crucial step of taking the greatest convex minorant (resp. least concave majorant) of a preliminary estimator. These estimators are often referred to as Grenander-type estimators due to the pioneer work of \cite{grenander1956theory} for the estimation of a monotone density. Other examples of Grenander-type estimators include the estimation of a concave distribution function \citep{beare2017weak}, the estimation of a monotone hazard rate \citep{marshall1965maximum,rao1970estimation}, and isotonic regression \citep{robertson1975consistency}. General asymptotic results for nonparametric estimation under shape restrictions can be found in, for example, \cite{durot2007mathbb,durot2012limit,westling2020unified} and the book \cite{groeneboom_jongbloed_2014}. Our paper contributes to this literature by considering a new shape restriction on the distribution function based on the economically meaningful Myerson regularity condition.

\section{The model and the estimator}

We consider the second-price auction model with independent private values. An indivisible object is auctioned. We assume that there are multiple auctions which are homogeneous. All bidders are risk neutral and symmetric. Without loss of generality, we pool all the bidders together. Their private values $V_1,\cdots,V_n$ are iid draws from a common distribution $F$, which we call the value distribution. The distribution $F$ is absolutely continuous with density function $f$ and compact support $[\underline{v},\bar{v}]$.

The second-price auctions are truth revealing: it is a dominant strategy for the bidders to truthfully report their valuations. Therefore, we can assume that the researcher observes the values $V_1,\cdots,V_n$. The goal is to estimate the density function $f$ based on this sample.

It is impossible to estimate $f$ without imposing assumptions on the value distribution. The typical procedure is to assume some smoothness condition on the density function $f$ and then apply the smoothing-based methods such as the kernel density estimator. However, in the auction setting, there is a more natural restriction on the density $f$ that arises from the microeconomic theory --- \cite{myerson1981optimal} regularity condition.

\begin{assumption}[Myerson regularity] \label{ass:myerson-regularity}
    The function $v \mapsto \varphi(v) \equiv v - \frac{1-F(v)}{f(v)}$ is nondecreasing on $[\underline{v},\bar{v}]$. The function $\varphi(\cdot)$ is referred to as the virtual value function.
\end{assumption}

Myerson regularity condition is a very important condition in the mechanism design theory. Some of the most celebrated results in the theory of mechanism design require the underlying valuation distribution to be regular. For example, the second-price auction with reserve price maximizes the revenue only under the condition of regularity. 

There is an economic intuition behind Myerson regularity condition explained by \cite{bulow1996auctions}. Suppose the seller sets the price $p$ in a market with measure one buyers whose private value follows the distribution $F$. There are $1-F(p)$ buyers whose value is higher than the price $p$. These buyers will choose to purchase. The quantity sold is therefore equal to $q = 1-F(p)$. The revenue collected by the seller as a function of the quantity sold is 
\begin{align*}
    R(q) = pq = F^{-1}(1-q)q,
\end{align*}
where we assume that $F$ is strictly increasing with inverse $F^{-1}$. The marginal revenue is equal to 
\begin{align*}
    R'(q) = F^{-1}(1-q) - q/f(F^{-1}(1-q)) = p - \frac{1-F(p)}{f(p)}.
\end{align*}
Myerson regularity condition states that the marginal revenue $R'(q)$ is nondecreasing in the price $p$, which implies that it is nonincreasing in the quantity $q$. This means that Myerson regularity is equivalent to the concavity of the revenue function $R$.

From the statistical perspective, Assumption \ref{ass:myerson-regularity} imposes a restriction on the shape of the density function $f$ that can be utilized for estimation. To facilitate estimation, we consider an equivalent condition of Myerson regularity under a mild technical assumption.

\begin{assumption} \label{ass:f-continuity}
    The density function $f$ is continuous and satisfies the following condition:
    \begin{align*}
        \limsup_{\varepsilon \rightarrow 0^+} (f(v+\varepsilon) - f(v))/\varepsilon > -\infty, \text{ for every } v \in [\underline{v},\bar{v}] \text{ except possibly at a countable set.}
    \end{align*}
    
\end{assumption}

Besides the continuity of $f$, Assumption \ref{ass:f-continuity} also requires that the upper Dini derivative $\limsup_{\varepsilon \rightarrow 0^+} (f(v+\varepsilon) - f(v))/\varepsilon$ is finite. This is a very mild technical condition that can be satisfied if, for example, the density $f$ is locally Lipschitz continuous. Following \cite{ewerhart2013regular}, we can now equivalently state Myerson regularity as follows.

\begin{proposition} \label{lm:equiv-myerson-regular}
    Under Assumption \ref{ass:f-continuity}, Myerson regularity (Assumption \ref{ass:myerson-regularity}) is equivalent to the following condition:
    \begin{align} \label{eqn:convex-1/(1-F)}
        \Lambda(v) \equiv (1-F(v))^{-1} \text{ is a convex function on } [\underline{v},\bar{v}].
    \end{align}
\end{proposition}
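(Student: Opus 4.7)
The plan is to exploit the identity $\varphi(v) = v - \Lambda(v)/\Lambda'(v)$, which follows from $\Lambda'(v) = f(v)/(1-F(v))^2$ and hence $\Lambda(v)/\Lambda'(v) = (1-F(v))/f(v)$. Since $f$ is continuous by Assumption~\ref{ass:f-continuity}, the function $\Lambda$ is $C^1$ on $[\underline{v},\bar{v})$, so $\Lambda$ is convex there if and only if $\Lambda'$ is nondecreasing. The problem therefore reduces to showing that $\Lambda'$ is nondecreasing if and only if $\varphi$ is nondecreasing.

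Since $\Lambda'$ and $\varphi$ are both continuous, I would invoke a Goldowsky--Tonelli-type monotonicity theorem: a continuous function $g$ on an interval is nondecreasing if and only if its upper right Dini derivative $\overline{D}^+g(v) = \limsup_{\varepsilon \to 0^+}(g(v+\varepsilon)-g(v))/\varepsilon$ is nonnegative everywhere except possibly on a countable set. The role of Assumption~\ref{ass:f-continuity} is precisely to guarantee that $\overline{D}^+f$ is finite off a countable set, so that the Dini derivatives of the composite functions $\Lambda'$ and $\varphi$ are themselves finite off a countable set and the monotonicity theorem applies.

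The computational core is then to evaluate $\overline{D}^+\Lambda'$ and $\overline{D}^+\varphi$ using the product and quotient rules for Dini derivatives, writing $\Lambda' = f\cdot (1-F)^{-2}$ and $\varphi(v)=v-(1-F(v))\cdot f(v)^{-1}$ so that the factors other than $f$ are $C^1$ and of one sign on the interior of the support. A short Dini calculation would give
\begin{align*}
    \overline{D}^+\Lambda'(v) = \frac{\overline{D}^+f(v)}{(1-F(v))^2} + \frac{2f(v)^2}{(1-F(v))^3}, \qquad \overline{D}^+\varphi(v) = 2 + \frac{(1-F(v))\,\overline{D}^+f(v)}{f(v)^2}.
\end{align*}
Both $\overline{D}^+\Lambda'(v)\ge 0$ and $\overline{D}^+\varphi(v)\ge 0$ reduce to the same inequality $\overline{D}^+f(v) \ge -2f(v)^2/(1-F(v))$, so the two sign conditions coincide off the common countable exceptional set and the monotonicity theorem delivers the desired equivalence.

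The main obstacle will be handling the Dini calculus correctly: since $f$ is not assumed differentiable, the classical second-derivative computation of $\Lambda''$ and first-derivative computation of $\varphi'$ are unavailable, and the product/quotient identities must be used in the form valid for upper Dini derivatives, which requires the other factor to be $C^1$ with a definite sign, so the decompositions have to be arranged carefully. A secondary subtlety is the behavior at the right endpoint $\bar{v}$, where $\Lambda$ diverges and $(1-F(v))/f(v)$ may present a $0/0$ indeterminate form; this can be handled by first establishing the equivalence on $[\underline{v},\bar{v})$ and then extending to the closed interval by continuity.
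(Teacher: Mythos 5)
Your proposal is correct in substance, but it takes a genuinely different route from the paper: the paper's entire proof is two lines, observing that continuity of $f$ gives right-continuity and upper semi-continuity and then citing Lemma 4.1 of \cite{ewerhart2013regular}, which is precisely the equivalence being asserted. What you have written is, in effect, a self-contained reconstruction of that cited lemma, and the machinery is the right one: reduce convexity of the $C^1$ function $\Lambda$ to monotonicity of $\Lambda'$, invoke the classical (Goldowsky--Tonelli-type) theorem that a continuous function is nondecreasing if and only if its upper right Dini derivative is nonnegative off a countable set, and compute the Dini derivatives of $\Lambda' = f\cdot(1-F)^{-2}$ and $\varphi(v) = v + (F(v)-1)\cdot f(v)^{-1}$ via the product rule with a $C^1$ cofactor. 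Your displayed formulas check out; note that in the $\varphi$ computation the $C^1$ factor $F-1$ is \emph{negative}, so the product rule yields $(F-1)\,\underline{D}^+(1/f)$ and the identity $\underline{D}^+(1/f) = -\overline{D}^+ f/f^2$ then makes the two sign flips cancel --- exactly the bookkeeping you flag as the main obstacle. Two minor points: the identity $\varphi = v - \Lambda/\Lambda'$ presupposes $f>0$ on the support, which is implicit in Assumption \ref{ass:myerson-regularity} since $\varphi$ must be well-defined; and at $\bar v$ one has $\Lambda(\bar v)=+\infty$, so the passage to the closed interval is not ``by continuity'' but by the convention that a convex function may take the value $+\infty$ at an endpoint, whence convexity on $[\underline v,\bar v)$ already yields convexity on $[\underline v,\bar v]$. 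If the goal is merely to match the paper, the citation suffices; your argument is essentially what the proof of the cited lemma amounts to, spelled out.
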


We can better illustrate Proposition \ref{lm:equiv-myerson-regular} by assuming that the density $f$ is differentiable with derivative $f'$. Denote $\lambda$ as the derivative of $\Lambda$:
\begin{align*}
    \lambda(v) \equiv \Lambda'(v) = f(v)/(1-F(v))^2.
\end{align*}
Then we have 
\begin{align*}
    \varphi'(v) & = \frac{2f(v)^2 + (1-F(v))f'(v)}{f(v)^2}, \\
    \lambda'(v) & = \frac{2f(v)^2 + (1-F(v))f'(v)}{(1-F(v))^3}.
\end{align*}
The respective signs of $\varphi'(v)$ and $\lambda'(v)$ coincide and are both determined by the sign of $2f(v)^2 + (1-F(v))f'(v)$. Proposition \ref{lm:equiv-myerson-regular} is the generalization of this equivalence result to the case where $f$ is not differentiable.\footnote{Condition (\ref{eqn:convex-1/(1-F)}) was first pointed out by \cite{mcafee1987auctions} in footnote 11. \cite{ewerhart2013regular} drew a connection between $f/(1-F)^{-2}$, the derivative of $(1-F)^{-1}$, and the probability rate of being the next to fail. \cite{SZECH2011optimal} offered a different perspective on the condition, relating it to the monotonicity of the sequence of increments of expected second order statistics. Lastly, \cite{fang2017nonparametric} provided an additional interpretation of the condition, viewing it as the convexity of odds ratio.} 

Condition (\ref{eqn:convex-1/(1-F)}) is the key to the estimation procedure we propose. Based on the empirical distribution function $F_n(v) \equiv \frac{1}{n} \sum_{i=1}^n \mathbf{1}\{V_i \leq v\}$, we can construct an estimator for $\Lambda(v)$ as 
\begin{align*}
    \Lambda_n(v) \equiv (1 - F_n(n) + 1/n)^{-1},
\end{align*}
where we add the term $1/n$ in the denominator to avoid dividing by zero. The estimator $\Lambda_n$ is converging to $\Lambda$ but may not be convex in finite samples. Let $\hat{\Lambda}_n$ be the greatest convex minorant (gcm) of $\Lambda_n$. That is, $\hat{\Lambda}_n$ is the largest convex function that lies below $\Lambda_n$. Since $\hat{\Lambda}_n$ is convex, it is almost everywhere differentiable. Take $\hat{\lambda}_n$ as the left-derivative of $\hat{\Lambda}_n$. Since $\hat{\lambda}_n$ is an estimator for $\lambda = f/(1-F)^2$, we can naturally estimate $f$ by the estimator 
\begin{align*}
    \hat{f}_n(v) \equiv \hat{\lambda}_n(v) (1-F_n(v))^2.
\end{align*}
This nonparametric estimator is tuning-parameter-free. In particular, we do not need to choose the bandwidth as in kernel-based estimators.

\section{Asymptotic properties}

In this section, we derive the asymptotic properties of the estimator $\hat{f}_n$. It is well-known that Grenander-type estimators can behave badly near the boundary points \citep{woodroofe1993penalized,kulikov2006behavior,balabdaoui2011grenander}. Therefore, we focus on a subinterval $[a,b]$ of the support $[\underline{v},\bar{v}]$, where $\underline{v} < a < b < \bar{v}$.

\subsection{Consistency and asymptotic distribution}

We first derive the uniform consistency of the estimator $\Lambda_n$, which can be derived based on the uniform consistency of the empirical distribution. 

\begin{lemma} \label{lm:consistency-Lambda}
    $\Lambda_n$ is uniformly consistent over $[a,b]$, that is,
    \begin{align*}
        \sup_{v \in [a,b]}|\hat{\Lambda}_n(v) - \Lambda(v)| = o_p(1).
    \end{align*}
\end{lemma}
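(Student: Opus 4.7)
The plan is a two-step argument: first I would establish uniform consistency of the raw estimator $\Lambda_n$ on $[a,b]$ via Glivenko--Cantelli and a Lipschitz argument, and then transfer this consistency to the greatest convex minorant $\hat{\Lambda}_n$ by exploiting the fact that the limit $\Lambda$ is already convex on $[a,b]$ (by Proposition \ref{lm:equiv-myerson-regular} combined with Assumption \ref{ass:myerson-regularity}), which makes the gcm map stable around $\Lambda$.

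\textbf{Step 1: consistency of $\Lambda_n$.} Since $b < \bar{v}$ and $F$ is continuous with support $[\underline{v},\bar{v}]$, we have $1 - F(b) > 0$. By the Glivenko--Cantelli theorem, $\sup_{v}|F_n(v)-F(v)| = o_p(1)$, so with probability tending to one, $1 - F_n(v) + 1/n \geq (1-F(b))/2 > 0$ uniformly in $v \in [a,b]$. On this event the map $x \mapsto 1/x$ is Lipschitz on the relevant range, so
\[
\sup_{v \in [a,b]} |\Lambda_n(v) - \Lambda(v)| \; \leq \; C \Bigl(\sup_v |F_n(v) - F(v)| + \tfrac{1}{n}\Bigr) \; = \; o_p(1),
\]
for some constant $C$ depending only on $1-F(b)$.

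\textbf{Step 2: transfer to $\hat{\Lambda}_n$.} Write $\varepsilon_n \equiv \sup_{v \in [a,b]}|\Lambda_n(v)-\Lambda(v)|$, so that $\varepsilon_n = o_p(1)$. Since $\Lambda$ is convex on $[a,b]$, the function $\Lambda - \varepsilon_n$ is also convex and lies pointwise below $\Lambda_n$ on $[a,b]$; by the maximality property of the greatest convex minorant, $\hat{\Lambda}_n \geq \Lambda - \varepsilon_n$ there. Conversely, $\hat{\Lambda}_n \leq \Lambda_n \leq \Lambda + \varepsilon_n$ on $[a,b]$. Combining the two bounds yields $\sup_{v \in [a,b]} |\hat{\Lambda}_n(v) - \Lambda(v)| \leq \varepsilon_n = o_p(1)$.

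\textbf{Main obstacle.} The sandwich in Step 2 is clean only when the gcm is computed over a domain on which $\Lambda$ itself is convex and the uniform approximation $\varepsilon_n$ is controlled. If $\hat{\Lambda}_n$ is instead defined as the gcm of $\Lambda_n$ on the whole support $[\underline{v},\bar{v}]$, one must worry that the gcm on $[a,b]$ could be dragged down by the behavior of $\Lambda_n$ near $\bar{v}$, where $\Lambda$ diverges and uniform consistency fails. I would handle this by noting that $\Lambda_n$ remains uniformly bounded ($\leq n$) and by localizing: picking $c \in (b,\bar{v})$, showing uniform convergence on $[\underline{v},c]$, and arguing that the touch points of the gcm straddling any $v \in [a,b]$ lie in $[\underline{v},c]$ with probability tending to one, so that the local sandwich argument applies. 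This localization step is the only genuinely delicate part of the proof; everything else is a direct consequence of Glivenko--Cantelli and the stability of the gcm around a convex target.
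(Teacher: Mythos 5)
Your Step 1 is, in substance, the paper's entire proof: the paper writes $\Lambda_n(v)-\Lambda(v)=(1-\tilde\xi)^{-2}(F_n(v)-F(v)+1/n)$ via the mean value theorem and shows the prefactor is $O_p(1)$ by keeping the denominator away from zero to the left of some $\tilde b\in(b,\bar v)$, then applies Glivenko--Cantelli; your Lipschitz-on-a-good-event argument is the same estimate. Note that the paper's lemma, as stated in prose and as used downstream (uniform consistency of the \emph{raw} $\Lambda_n$ is what feeds into Theorem 1 of \cite{westling2020unified} in the proof of Theorem \ref{thm:consistency}), concerns $\Lambda_n$ without the hat; the hat in the displayed equation appears to be a typo, and the paper never proves anything about $\hat\Lambda_n$ in this lemma. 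Your Step 2 is therefore extra material prompted by that hat. It is correct as written provided the greatest convex minorant is taken over $[a,b]$ itself --- which is consistent with how the paper later uses the switching relation with $\argmin_{s\in[a,b]}$ --- since then $\Lambda-\varepsilon_n$ is a convex minorant of $\Lambda_n$ on the whole domain of the gcm and the sandwich closes. If instead the gcm is taken over all of $[\underline v,\bar v]$, you have correctly identified the real difficulty ($\Lambda$ diverges at $\bar v$ while $\Lambda_n\le n$, so $\Lambda-\varepsilon_n$ is not a global minorant), but your resolution is only a sketch: the claim that the touch points straddling $[a,b]$ stay in $[\underline v,c]$ with probability tending to one is exactly the step that needs a quantitative argument (e.g., via the chord representation of the gcm and the monotonicity of $\Lambda_n$), and as written it is asserted rather than proved. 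Since the paper does not need the hatted version, this gap does not affect the lemma as actually used, but if you intend to prove the displayed statement literally for a globally defined gcm, Step 2 is not yet complete.
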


The following theorem states the (uniform) consistency of $\hat{f}_n(v)$ under the (uniform) continuity of the density function $f$. The idea is that (uniform) conintuity corresponds to (uniform) consistency.

\begin{theorem}[Consistency] \label{thm:consistency}
    Let Assumptions \ref{ass:myerson-regularity} - \ref{ass:f-continuity} hold. 
    \begin{enumerate} [label = (\roman*)]
        \item For any $v \in [a,b]$, $\hat{f}_n(v) = f(v) + o_p(1)$.
        \item If we further assume that the density $f$ is uniformly continuous on $[\underline{v},\bar{v}]$, then 
        \begin{align*}
        \sup_{v \in [a,b]}|\hat{f}_n(v) - f(v)| = o_p(1).
    \end{align*}
    \end{enumerate}
    
\end{theorem}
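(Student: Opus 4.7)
The plan is to reduce consistency of $\hat{f}_n$ to convergence of the gcm left-derivative $\hat{\lambda}_n$ to $\lambda = f/(1-F)^2$, and then recover $f$ by multiplying through by the Glivenko--Cantelli limit $(1-F_n)^2 \to (1-F)^2$. The starting point is Lemma~\ref{lm:consistency-Lambda}, applied on a slightly enlarged interval $[a',b']$ with $\underline{v} < a' < a < b < b' < \bar{v}$, which yields uniform convergence in probability of $\hat{\Lambda}_n$ to $\Lambda$ on $[a',b']$. Under Assumptions~\ref{ass:myerson-regularity}--\ref{ass:f-continuity}, Proposition~\ref{lm:equiv-myerson-regular} gives that $\Lambda$ is convex; together with the continuity of $f$ and the fact that $1-F$ is bounded away from zero on $[a',b']$, this implies that $\Lambda$ is continuously differentiable on $[a',b']$ with derivative $\lambda$.

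For part (i), the central ingredient is the classical convex-analytic fact that if convex functions $g_n$ converge pointwise (hence uniformly on compacta) to a convex function $g$ on an open interval, then the one-sided derivatives of $g_n$ converge to $g'(v)$ at every $v$ where $g$ is differentiable. To transfer this from a deterministic to a probabilistic statement, I would use the standard subsequence device: along any subsequence, extract a further subsequence on which $\hat{\Lambda}_n \to \Lambda$ uniformly almost surely on $[a',b']$; the convex-analytic result then yields $\hat{\lambda}_n(v) \to \lambda(v)$ almost surely for every $v \in (a',b')$, and in particular for any fixed $v \in [a,b]$. Combining with $(1-F_n(v))^2 \to_p (1-F(v))^2$ from Glivenko--Cantelli and invoking Slutsky's lemma gives $\hat{f}_n(v) \to_p \lambda(v)(1-F(v))^2 = f(v)$.

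For part (ii), the pointwise convergence of $\hat{\lambda}_n$ must be upgraded to uniform convergence. Under uniform continuity of $f$, $\lambda$ is continuous on $[a',b']$. Both $\hat{\lambda}_n$ and $\lambda$ are nondecreasing, being left-derivatives of convex functions, so P\'olya's extension of Dini's theorem applies: pointwise convergence of monotone functions to a continuous monotone limit on a compact interval is automatically uniform. The same subsubsequence argument then gives $\sup_{v \in [a,b]} |\hat{\lambda}_n(v) - \lambda(v)| = o_p(1)$. Multiplying by the uniformly convergent sequence $(1-F_n)^2$ (uniform Glivenko--Cantelli) yields the uniform consistency of $\hat{f}_n$ on $[a,b]$.

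The main obstacle is the clean hand-off from the probabilistic uniform convergence of $\hat{\Lambda}_n$ to convergence of its left-derivatives, since the relevant convex-analytic convergence-of-derivatives results are deterministic. The subsequence device is the cleanest bridge. A minor technicality is that $\hat{\lambda}_n$ can have jump discontinuities, but because the limit $\lambda$ is continuous at every $v \in [a,b]$, both the pointwise result and the P\'olya-type uniform conclusion apply without modification.
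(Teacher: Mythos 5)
Your route is genuinely different from the paper's. The paper proves consistency of $\hat{\lambda}_n$ by verifying the hypotheses of Theorem 1 of Westling and Carone (2020) (setting $\Gamma_n=\Lambda_n$, $\Gamma_0=\Lambda$, $\Phi_0=\mathrm{id}$), so the only substantive check is the (uniform) continuity of $\lambda=f/(1-F)^2$ on $[a,b]$; the hand-off from the raw estimator to the gcm derivative is entirely outsourced to that cited theorem. You instead argue from first principles: uniform convergence of convex functions implies convergence of their one-sided derivatives at differentiability points of the limit (Rockafellar, Thm.~25.7), upgraded to uniformity via P\'olya's theorem using monotonicity of $\hat{\lambda}_n$ and continuity of $\lambda$, with the subsequence device bridging the probabilistic and deterministic statements. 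The final Slutsky/triangle-inequality step multiplying by $(1-F_n)^2$ coincides with the paper's. Your argument is more self-contained and, as a bonus, makes clear that the extra uniform-continuity hypothesis in part (ii) is not really doing work on the compact $[a',b']$.

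There is, however, one genuine gap: the very first step, ``Lemma \ref{lm:consistency-Lambda} yields uniform convergence of $\hat{\Lambda}_n$ to $\Lambda$ on $[a',b']$,'' is not something the paper actually establishes --- the display in that lemma carries a hat, but its proof (mean value theorem plus Glivenko--Cantelli) concerns only the raw estimator $\Lambda_n$. Passing from $\sup_{[a',b']}|\Lambda_n-\Lambda|=o_p(1)$ to the same statement for the greatest convex minorant is nontrivial here, because $\hat{\Lambda}_n$ is the gcm of $\Lambda_n$ over the \emph{entire} support $[\underline{v},\bar{v}]$, and a chord anchored at a point $w$ near $\bar{v}$ can in principle pull $\hat{\Lambda}_n$ down on $[a,b]$. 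The naive Marshall-type bound $\hat{\Lambda}_n \geq \Lambda - \sup_{[\underline{v},\bar{v}]}(\Lambda-\Lambda_n)_+$ is useless, since $\Lambda=(1-F)^{-1}$ diverges at $\bar{v}$ while $\Lambda_n\leq n$, so that supremum is not $o_p(1)$ (it is infinite). One must therefore localize --- e.g.\ show that the touch points of the gcm bracketing any $v\in[a,b]$ eventually lie in $[a',b']$, or bound the contribution of chords with an endpoint outside $[a',b']$ using convexity of $\Lambda$ and the one-sided inequality $\hat{\Lambda}_n\leq\Lambda_n$. This localization is exactly the technical content that the paper's appeal to Westling and Carone absorbs; your proof needs to supply it explicitly before the convex-analysis machinery can be applied. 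The remainder of your argument (derivative convergence, subsequences, P\'olya, Slutsky) is sound.
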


Because $\hat{\lambda}_n$ is the derivative of $\hat{\Lambda}_n$, to derive its asymptotic distribution, we need to first study the local behavior of $\Lambda_n - \Lambda$. Define the stochastic process $J_n(t)$ as 
\begin{align*}
    J_n(t) \equiv n^{2/3}\left(\Lambda_n(v + tn^{-1/3})  - \Lambda(v + tn^{-1/3}) - (\Lambda_n(v) - \Lambda(v)) \right).
\end{align*}
The following lemma establishes the weak convergence of $J_n(t)$, which is useful in deriving the asymptotic distribution of $\hat{f}_n(v)$.

\begin{lemma} \label{lm:asymp-dist-Lambda}
Let Assumptions \ref{ass:myerson-regularity} - \ref{ass:f-continuity} hold. For any $v \in [a,b]$, we have
    \begin{align*}
        J_n(t) \overset{d}{\rightarrow} \frac{\sqrt{f(v)}}{(1-F(v))^2} \mathbf{B}(t),
    \end{align*}
    where $\mathbf{B}(t)$ is a two-sided Brownian motion.
\end{lemma}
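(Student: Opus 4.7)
The plan is to linearize $\Lambda_n - \Lambda$ and reduce the problem to a local central limit theorem for the empirical distribution function. Writing $a_n(v) \equiv 1 - F_n(v) + 1/n$ and $a(v) \equiv 1-F(v)$, the elementary identity
\begin{align*}
    \Lambda_n(v) - \Lambda(v) = \frac{(F_n(v) - F(v)) - 1/n}{a(v)\,a_n(v)}
\end{align*}
holds. Substituting into the definition of $J_n(t)$, the deterministic $-1/n$ term cancels in the pairwise difference. By Glivenko-Cantelli together with the continuity of $F$, both random denominators $a(v+tn^{-1/3})a_n(v+tn^{-1/3})$ and $a(v)a_n(v)$ can be replaced by $(1-F(v))^2$, with the resulting quadratic remainder bounded using $\sup_v|F_n-F| = O_p(n^{-1/2})$ to yield a contribution of order $n^{2/3}\cdot O_p(n^{-1}) = o_p(1)$. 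This reduces the claim to
\begin{align*}
    J_n(t) = \frac{1}{(1-F(v))^2}\,n^{2/3}\bigl[(F_n - F)(v + tn^{-1/3}) - (F_n - F)(v)\bigr] + o_p(1).
\end{align*}

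Next, I would show that the bracketed increment, viewed as a process in $t$ on compact sets, converges weakly to $\sqrt{f(v)}\,\mathbf{B}(t)$. The $n^{2/3}$-rescaled centered increment is $n^{-1/3}\sum_{i=1}^n[\mathbf{1}\{V_i \in (v, v+tn^{-1/3}]\} - (F(v+tn^{-1/3})-F(v))]$ for $t>0$ (and analogously for $t<0$). A routine variance computation gives $\mathrm{Var} \approx f(v)|t|$; cross-covariances for $t_1, t_2$ of the same sign give $f(v)(|t_1|\wedge|t_2|)$, while for opposite signs the underlying intervals are disjoint, so the covariance vanishes. This is exactly the covariance structure of $\sqrt{f(v)}\,\mathbf{B}(t)$, and the Lindeberg condition is immediate from the uniform boundedness of the summands together with the $n^{-1/3}$ rescaling, so finite-dimensional convergence follows from the standard triangular-array CLT.

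The main technical obstacle is process-level tightness on compact $t$-intervals. This is a standard result for local empirical processes on shrinking neighborhoods and can be derived from Bernstein-type modulus-of-continuity bounds for the empirical process applied to the VC class of half-line indicators (see \cite{groeneboom_jongbloed_2014}). Once process-level convergence of the bracketed term is established, the continuous mapping theorem with the deterministic factor $1/(1-F(v))^2$ delivers the claimed limit. I emphasize process-level rather than merely pointwise convergence because the ultimate goal — deducing the asymptotic distribution of $\hat{\lambda}_n(v)$, and hence of $\hat{f}_n(v)$, via the greatest convex minorant functional — requires weak convergence in the function space to invoke a continuous-mapping argument of argmin type.
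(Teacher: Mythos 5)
Your proposal follows essentially the same route as the paper's proof: isolate the main term $n^{2/3}(1-F(v))^{-2}\bigl[(F_n-F)(v+tn^{-1/3})-(F_n-F)(v)\bigr]$, show the denominator-replacement and $1/n$ remainders are negligible, and establish process-level weak convergence of the local empirical increment (covariance $f(v)(|s|\wedge|t|)$, Lindeberg, tightness) exactly as the paper does via Theorem 2.11.1 of van der Vaart and Wellner. One bookkeeping correction: the error from replacing the denominator at $v+tn^{-1/3}$ by $(1-F(v))^{2}$ is $O_p(n^{-1/3})$, not $O_p(n^{-1/2})$, because of the deterministic shift in the evaluation point, so that remainder is $n^{2/3}O_p(n^{-1/3})O_p(n^{-1/2})=O_p(n^{-1/6})$ rather than $n^{2/3}O_p(n^{-1})$ --- still $o_p(1)$, so the conclusion is unaffected.
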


\begin{theorem}[Asymptotic distribution] \label{thm:asymp-dist}
    Let Assumptions \ref{ass:myerson-regularity} - \ref{ass:f-continuity} hold. If we further assume that the density $f$ is continuously differentiable on $[\underline{v},\bar{v}]$, then for any $v \in [a,b]$,
    \begin{align*}
        n^{1/3}(\hat{f}_n(v) - f(v)) \overset{d}{\rightarrow} C(v) Z,
    \end{align*}
    where $Z\equiv \argmax_{t \in \mathbb{R}} \{ \mathbf{B}(t) - t^2 \}$, and the constant $C(v)$ is 
    \begin{align*}
        C(v) \equiv \left( \frac{8f(v)^3}{1-F(v)} + 4f(v)f'(v) \right)^{1/3}.
    \end{align*}
\end{theorem}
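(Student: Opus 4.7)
The plan is to reduce the problem for $\hat f_n(v)$ to an asymptotic statement about the slope $\hat\lambda_n(v)$ of the greatest convex minorant, and then handle the $(1-F_n(v))^2$ factor separately by Slutsky. Write
\begin{align*}
    \hat f_n(v) - f(v) = \bigl(\hat\lambda_n(v) - \lambda(v)\bigr)(1-F_n(v))^2 + \lambda(v)\bigl[(1-F_n(v))^2 - (1-F(v))^2\bigr].
\end{align*}
The second term is $O_p(n^{-1/2})$ by the Donsker theorem, hence negligible on the $n^{-1/3}$ scale, and $(1-F_n(v))^2 \to (1-F(v))^2$ in probability. So it suffices to show $n^{1/3}(\hat\lambda_n(v) - \lambda(v)) \Rightarrow (4\sigma^2\lambda'(v))^{1/3}Z$ with $\sigma = f(v)^{1/2}/(1-F(v))^2$, and to verify that the prefactor $(1-F(v))^2(4\sigma^2\lambda'(v))^{1/3}$ equals $C(v)$ using $\lambda'(v) = [2f(v)^2 + (1-F(v))f'(v)]/(1-F(v))^3$.

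The core step is the asymptotic distribution of $\hat\lambda_n(v)$. I would invoke the standard \emph{switch relation} for the left-derivative of a greatest convex minorant: for any $c \in \mathbb{R}$,
\begin{align*}
    \{\hat\lambda_n(v) \leq c\} = \bigl\{\, \mathrm{argmin}_{u}\{\Lambda_n(u) - c\,u\} \geq v \,\bigr\},
\end{align*}
where the $\mathrm{argmin}$ is taken over a suitable neighbourhood of $v$. Setting $c = \lambda(v) + x n^{-1/3}$ and localizing via $u = v + t n^{-1/3}$, I would subtract the $t$-constant term $\Lambda_n(v) - cv$ and multiply by $n^{2/3}$ (both preserve the argmin), and then split
\begin{align*}
    n^{2/3}\bigl[\Lambda_n(v + tn^{-1/3}) - \Lambda_n(v)\bigr] - n^{1/3} c\, t
    = J_n(t) + n^{2/3}\bigl[\Lambda(v+tn^{-1/3}) - \Lambda(v) - \lambda(v)tn^{-1/3}\bigr] - xt.
\end{align*}
Continuous differentiability of $f$ (and hence of $\lambda$) around $v$ lets me Taylor expand the middle bracket into $\tfrac{1}{2}\lambda'(v)t^2 + o(1)$ uniformly on compacts of $t$. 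Combined with Lemma \ref{lm:asymp-dist-Lambda}, the localized criterion converges in distribution, on $C(\mathbb{R})$ with the topology of uniform convergence on compacts, to $\sigma \mathbf{B}(t) + \tfrac{1}{2}\lambda'(v)t^2 - xt$.

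Next I would apply the argmax continuous mapping theorem (as in Kim--Pollard) to conclude that
\begin{align*}
    n^{1/3}\bigl(\mathrm{argmin}_u\{\Lambda_n(u) - cu\} - v\bigr) \overset{d}{\to} \mathrm{argmin}_t\bigl\{\sigma \mathbf{B}(t) + \tfrac{1}{2}\lambda'(v)t^2 - xt\bigr\}.
\end{align*}
Completing the square in $t$ and using stationarity of Brownian increments (shifting the origin absorbs the linear term), this argmin has the same law as $x/\lambda'(v) + \mathrm{argmin}_s\{\sigma \mathbf{B}(s) + \tfrac{1}{2}\lambda'(v)s^2\}$. Brownian self-similarity $\mathbf{B}(\alpha s) \stackrel{d}{=} \alpha^{1/2}\mathbf{B}(s)$ with $\alpha = (2\sigma/\lambda'(v))^{2/3}$ rescales the latter to $\alpha \cdot \mathrm{argmin}_s\{\mathbf{B}(s) + s^2\} \stackrel{d}{=} \alpha Z$ (using symmetry of $\mathbf{B}$ and $Z$). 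Via the switch relation, $P(n^{1/3}(\hat\lambda_n(v) - \lambda(v)) \leq x)$ thus converges to $P(\alpha Z \geq -x/\lambda'(v)) = P(\alpha\lambda'(v) Z \leq x)$, so that $n^{1/3}(\hat\lambda_n(v)-\lambda(v)) \Rightarrow \alpha\lambda'(v) Z = (4\sigma^2\lambda'(v))^{1/3}Z$. Combining with the earlier decomposition and simplifying yields $C(v)Z$.

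The main obstacle is the rigorous application of the argmax continuous mapping theorem: I need tightness of $n^{1/3}(\mathrm{argmin}_u - v)$, which requires a uniform parabolic lower bound on the localized criterion that dominates the fluctuations of $J_n$ for large $|t|$. This is standard for cube-root Grenander-type problems, and can be obtained from convexity of $\Lambda$ together with the modulus of continuity bounds for the empirical process underlying $\Lambda_n$; I would rely on the machinery of Kim--Pollard or the framework of \cite{groeneboom_jongbloed_2014}. A secondary technical point is ensuring that the switch relation applies on a neighbourhood of $v$ interior to $[a,b]$ with probability tending to one, which follows from the uniform consistency of $\hat\Lambda_n$ established in Lemma \ref{lm:consistency-Lambda}.
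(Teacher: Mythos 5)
Your proposal is correct and follows essentially the same route as the paper's proof: the switch relation for the gcm slope, localization at scale $n^{-1/3}$, the weak convergence of the localized process via Lemma \ref{lm:asymp-dist-Lambda} plus a Taylor expansion of $\Lambda$, the argmax continuous mapping theorem, Brownian rescaling to Chernoff's distribution, and a Slutsky step for the $(1-F_n(v))^2$ factor; your constant $(4\sigma^2\lambda'(v))^{1/3}$ agrees with the paper's $2\alpha^{2/3}\beta^{1/3}$ and simplifies to $C(v)$. The only difference is cosmetic (you peel off the $(1-F_n)^2$ factor at the outset rather than at the end), and you are appropriately explicit about the tightness requirement for the argmax theorem, which the paper handles by citation.
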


The distribution of $Z$, the argmax of two-sided brownian motion with quadratic drift, is referred to as Chernoff's distribution as it first arose in \cite{chernoff1964estimation} on mode estimation. The density, distribution function, quantiles, and moments of Chernoff's distribution are computed in \cite{groeneboom2001computing}. In particular, its density function is symmetric around zero, and hence our estimator does not have asymptotic bias. In contrast, kernel density estimators often has asymptotic bias when converging at the minimax optimal rate. The variance of $Z$ is approximately $0.26$. As suggested by \cite{dykstra1999distribution}, the distribution of $Z$ can be approximated by the normal distribution $N(0,(0.52)^2)$.

To conduct inference on $f(v)$, one can estimate the density derivative $f'(v)$ using the conventional methods and obtain an estimate for $C(v)$. To obtain the quantiles of $Z$, one can use Table 3 in \cite{groeneboom2001computing}. Notice that it is difficult to obtain a cube-root test statistic based on the kernel density estimator when the density is only first-order differentiable. In general, inference procedures need undersmoothing to eliminate the asymptotic bias, thus leading to suboptimal convergence rates.

\subsection{Minimax optimality}

We are interested to know whether our estimator attains the optimal rate of convergence given the current set of assumptions that we are considering. To do that, our goal is to derive lower bounds on the convergence rate achievable by any estimation procedure. These lower bounds represent the intrinsic difficulty of the estimation problem at hand. From \cite{stone1980optimal}, we know that the lower bound for estimating a continuously differentiable density is $n^{-1/3}$, which is achievable by the kernel density estimator. Nonetheless, our problem deviates from this scenario due to the incorporation of Myerson regularity in addition to smoothness.

Notice that evaluating the estimators' performance with respect to a particular density is not feasible. This is due to the existence of an invariably superior estimation method: simply discard the data and return that particular density function. Consequently, we should focus on assessing the performance of the estimators across a set of distributions, in a minimax sense. Therefore, we need to consider the performance of the set of estimators over a set of distributions in the minimax sense. In our case, the relevant set of densities is $\mathcal{F}$, defined as the set of distributions that have a.e. continuously differentiable densities and satisfy Assumptions \ref{ass:myerson-regularity} - \ref{ass:f-continuity}. The following theorem demonstrates a minimax lower bound on the convergence rate of any estimator as $n^{-1/3}$ (multiplied by a constant). Therefore, our estimator $\hat{f}_n$ achieves the optimal rate.

\begin{theorem}[Minimax optimal convergence rate] \label{thm:minimax}
    For any $v \in [a,b]$, there exists $c>0$ such that
    \begin{align*}
        \inf_{\tilde{f}_n} \sup_{f \in \mathcal{F}} \mathbb{E}_f|\tilde{f}_n(v) - f(v)| \geq c n^{-1/3}, 
    \end{align*}
    where $\mathbb{E}_f$ denotes the expectation with respect to the distribution $f$. The infimum $\inf_{\tilde{f}_n}$ is taken over the set of all estimators.
\end{theorem}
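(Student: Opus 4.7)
The plan is to invoke Le Cam's two-point method, the standard route to pointwise minimax lower bounds. The goal is to exhibit two densities $f_0, f_1 \in \mathcal{F}$ with $|f_0(v)-f_1(v)|$ of order $n^{-1/3}$ whose induced product measures are statistically indistinguishable, in the sense that $\mathrm{TV}(P_0^{\otimes n}, P_1^{\otimes n})$ stays bounded away from one. Le Cam's inequality then yields
\begin{equation*}
\inf_{\tilde f_n}\sup_{f\in\{f_0,f_1\}}\mathbb{E}_f|\tilde f_n(v)-f(v)|\;\geq\;\tfrac{1}{2}|f_0(v)-f_1(v)|\bigl(1-\mathrm{TV}(P_0^{\otimes n},P_1^{\otimes n})\bigr),
\end{equation*}
and the supremum over a two-point subset of $\mathcal{F}$ is a fortiori a lower bound for the supremum over all of $\mathcal{F}$.

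For the baseline, take $f_0$ to be the uniform density on $[\underline{v},\bar{v}]$. Then $\Lambda_0(u)=(\bar{v}-\underline{v})/(\bar{v}-u)$ is strictly convex, so $f_0\in\mathcal{F}$, and $f_0$ is bounded below by a positive constant. Fix a $C^\infty$ bump $\psi$ supported in $[-1,1]$ with $\int\psi=0$ and $\psi(0)>0$, set $h_n=n^{-1/3}$, and for a small constant $\epsilon>0$ define
\begin{equation*}
f_1(u)=f_0(u)+\epsilon\,h_n\,\psi\!\left(\frac{u-v}{h_n}\right).
\end{equation*}
Then $\int f_1=1$, $f_1$ is continuously differentiable, and $|f_1(v)-f_0(v)|=\epsilon\psi(0)\,n^{-1/3}$. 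For large $n$, the perturbation is $O(n^{-1/3})$, below the positive floor of $f_0$, so $f_1\geq 0$. For $n$ large enough, the bump is supported inside $(\underline{v},\bar{v})$, so $f_1$ agrees with $f_0$ near the boundary of the support.

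The verification that $f_1\in\mathcal{F}$ rests on Proposition \ref{lm:equiv-myerson-regular}, which reduces Myerson regularity to convexity of $\Lambda_1=(1-F_1)^{-1}$. The computation displayed after Proposition \ref{lm:equiv-myerson-regular} shows this is equivalent to the pointwise inequality $f_1'(u)(1-F_1(u))+2f_1(u)^2\geq 0$. The second term is uniformly bounded below by a positive constant because $f_0$ is strictly positive; the first term is bounded in absolute value by $\epsilon\|\psi'\|_\infty$, independently of $n$. Taking $\epsilon$ small enough (yet fixed) secures the inequality on all of $[\underline{v},\bar{v}]$. For the divergence bound, a direct estimate gives $\chi^2(f_1,f_0)\leq \int(f_1-f_0)^2/f_0\leq C\epsilon^2 h_n^3=C\epsilon^2/n$, so $\mathrm{TV}(P_0^{\otimes n},P_1^{\otimes n})$ is controlled by a constant strictly less than one for $\epsilon$ small. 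Combining with the Le Cam bound produces the asserted $cn^{-1/3}$ lower bound.

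The main obstacle is preserving Myerson regularity under the perturbation: the rescaled bump has derivative $h_n^{-1}\psi'(\cdot/h_n)$, so its contribution to $f_1'$ is of order one (not $o(1)$) in sup-norm, and this threatens convexity of $\Lambda_1$. The rescue is structural: the ``Myerson-positive'' term $2f_1^2$ in the second-derivative criterion remains uniformly bounded below because $f_0$ is a positive constant and $(1-F_1)\leq 1$, so a single fixed small amplitude $\epsilon$ dominates $|f_1'|(1-F_1)$ for all $n$, while simultaneously keeping the $\chi^2$ distance at the required $1/n$ scale and maintaining the $n^{-1/3}$ gap at the target point $v$.
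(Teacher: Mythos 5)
Your proposal is correct and follows essentially the same route as the paper: a two-point Le Cam argument with the uniform density as baseline and a mean-zero localized bump of height and width of order $n^{-1/3}$, with Myerson regularity preserved because the term $2f^2$ in the convexity criterion for $(1-F)^{-1}$ dominates the $O(1)$ derivative contributed by the bump. The only differences are cosmetic: the paper bounds the Hellinger distance and invokes the modulus-of-continuity form of Le Cam's bound, using a piecewise-linear perturbation whose regularity is checked by explicit computation (Lemma \ref{lm:f_2_Myerson}), whereas you use a smooth bump, a $\chi^2$/total-variation bound, and a soft domination argument with a fixed small amplitude $\epsilon$.
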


\section{Conclusion}

This paper applies the Myerson regularity condition as a shape constraint on the valuation distribution for nonparametric density estimation. We introduce a nonparametric estimator that is entirely data-driven and does not require tuning parameters. We demonstrate the consistency of this estimator at the cube-root rate, which is proven to be the minimax optimal convergence rate. We further derive the asymptotic Chernoff's distribution of the estimator and describe valid inference procedures.

\appendix 

\numberwithin{equation}{section}
\numberwithin{lemma}{section}
\numberwithin{definition}{section}

\section{Technical proofs}

\begin{proof}[Proof of Proposition \ref{lm:equiv-myerson-regular}]
    Since the density $f$ is continuous, it is also right-continuous and upper semi-continuous. Then the result follows from Lemma 4.1 in \cite{ewerhart2013regular}.
\end{proof}

\begin{proof}[Proof of Lemma \ref{lm:consistency-Lambda}]
    By the mean value theorem, there exists $\tilde{\xi}$ between $F_n(v)$ and $F(v)$ such that 
    \begin{align} \label{eqn:mean-value-Lambda}
        \Lambda_n(v) - \Lambda(v) = (1-\tilde{\xi})^{-2} (F_n(v) - F(v) + 1/n), v \in [a,b].
    \end{align}
    Since $F_n$ and $F$ are nondecreasing, and the function $(1-\cdot)^{-2}$ is strictly increasing , $(1-\tilde{\xi})^{-2}$ is bounded by $(1- F_n(b) \vee F(b) )^{-2}$ for $v \in [a,b]$.
    Taking the supremum over $[a,b]$, we have 
    \begin{align*}
        \sup_{v \in [a,b]} |\Lambda_n(v) - \Lambda(v)| & = \sup_{v \in [a,b]} |(1-\tilde{\xi})^2 (F_n(v) - F(v))| \\
        & \leq (1- F_n(b) \vee F(b) )^{-2} \Big( \sup_{v \in [a,b]} | F_n(v) - F(v) | + 1/n \Big).
    \end{align*}
    By the Glivenko-Cantelli theorem \citep[see, e.g., Theorem 19.1 in][]{vaart_1998}, we know that $\sup_{v \in [a,b]} | F_n(v) - F(v) | = o_p(1)$. The remaining task is to show that the term $(1- F_n(b) \vee F(b) )^{-2}$ is bounded in probability. Since $F$ is continuous, there exists $\tilde{b} \in (b,\bar{v})$ such that $F(b) < F(\tilde{b}) < F(\bar{v})=1$. By the strict monotonicity of the function $(1-\cdot)^{-2}$, we have 
    \begin{align*}
        \mathbb{P}\left( (1- F_n(b) \vee F(b) )^{-2} > (1-F(\tilde{b}))^{-2} \right) & = \mathbb{P}\left( F_n(b) > F(\tilde{b}) \right) \\
        & \leq \mathbb{P}\left( |F_n(b) - F(b)| > F(\tilde{b}) - F(b) \right) \rightarrow 0.
    \end{align*}
    This shows that $(1- F_n(b) \vee F(b) )^{-2} = O_p(1)$.
\end{proof}

\begin{proof}[Proof of Theorem \ref{thm:consistency}]
    We invoke Theorem 1 in \cite{westling2020unified} to prove the consistency of $\hat{\lambda}_n$. To comply with the notations in that paper, we can define $\theta_0 = \lambda$, $\theta_n = \hat{\lambda}_n$, $I = J_0 = [a,b]$, $\Psi_0 = \Gamma_0 = \Lambda$, $\Psi_n = \Gamma_n = \Lambda_n$ and $\Phi_0 = \Phi_n = \text{id}$, where $\text{id}$ is the identity mapping. By construction, $\Phi_0$ is strictly increasing and uniformly continuous. In view of Theorem 1 in \cite{westling2020unified}, we only need to show the (uniform) continuity of $\lambda$.
    
    For part (i) of the theorem, we know that $f$ is continuous, and $(1-F)^2$ is continuous and bounded away from zero on $[a,b]$. This implies that $\lambda$ is continuous on $[a,b]$. By Theorem 1 in \cite{westling2020unified}, we know that $\hat{\lambda}_n(v) = \lambda(v) + o_p(1)$. Then we know that the estimator $\hat{f}_n(v) = \hat{\lambda}_n(v) (1-F_n(v))^2$ is pointwise consistent for $f(v)$ by an application of Slutsky's theorem. 
    
    For part (ii) of the theorem, we know that $f$ is uniformly continuous, and $(1-F)^2$ is uniformly continuous and bounded away from zero on $[a,b]$. This implies that $\lambda$ is uniformly continuous on $[a,b]$. By Theorem 1 in \cite{westling2020unified}, we know that $\sup_{v \in [a,b]}|\hat{\lambda}_n(v) - \lambda(v)| = o_p(1)$. By the triangle inequality, we have 
    \begin{align*}
        \sup_{v \in [a,b]}|\hat{f}_n(v) - f(v)| & = \sup_{v \in [a,b]}|\hat{\lambda}_n(v)(1-F_n(v))^2 - \lambda(v)(1-F(v))^2| \\
        & \leq \sup_{v \in [a,b]}|(\hat{\lambda}_n(v) - \lambda(v))(1-F_n(v))^2| \\
        & + \sup_{v \in [a,b]}|\lambda(v)((1-F_n(v))^2 - (1-F(v))^2)|.
    \end{align*}
    The term $\sup_{v \in [a,b]}|(\hat{\lambda}_n(v) - \lambda(v))(1-F_n(v))^2|$ is $o_p(1)$ since $(1-F_n(v))^2$ is bounded. The term $\sup_{v \in [a,b]}|\lambda(v)((1-F_n(v))^2 - (1-F(v))^2)|$ is $o_p(1)$ since $\lambda(v)$ is bounded (due to continuity), and $((1-F_n(v))^2$ is uniformly consistent, which can be shown in a way analogous to Lemma \ref{lm:consistency-Lambda}. 
\end{proof}

\begin{lemma} \label{lm:asymp-dist-F}
For any $v \in [a,b]$,
    \begin{align*}
        n^{2/3} \left(F_n(v + tn^{-1/3})  - F(v + tn^{-1/3}) - (F_n(v) - F(v)) \right) \overset{d}{\rightarrow} \sqrt{f(v)} \mathbf{B}(t),
    \end{align*}
    where $\mathbf{B}(t)$ is a two-sided Brownian motion.
\end{lemma}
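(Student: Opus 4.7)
The plan is to interpret the statement as a local central limit theorem for the centered empirical process at $v$. Writing the empirical distribution as a sum of iid indicators yields
\begin{align*}
    Z_n(t) &\equiv n^{2/3}\bigl((F_n(v+tn^{-1/3}) - F(v+tn^{-1/3})) - (F_n(v) - F(v))\bigr) = n^{-1/3}\sum_{i=1}^n \xi_{n,i}(t),
\end{align*}
where, for $t \geq 0$, $\xi_{n,i}(t) = \mathbf{1}\{v < V_i \leq v+tn^{-1/3}\} - (F(v+tn^{-1/3}) - F(v))$, and for $t < 0$, $\xi_{n,i}(t) = -\mathbf{1}\{v+tn^{-1/3} < V_i \leq v\} + (F(v) - F(v+tn^{-1/3}))$. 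These summands are iid across $i$, centered, and uniformly bounded by $2$.

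Finite-dimensional convergence would follow from the Lindeberg-Feller CLT and the Cram\'er-Wold device. The key variance computation uses only the differentiability of $F$ at $v$ (guaranteed by continuity of $f$): for $t>0$,
\begin{align*}
    \mathrm{Var}(Z_n(t)) = n^{1/3}(F(v+tn^{-1/3}) - F(v))\bigl(1-(F(v+tn^{-1/3})-F(v))\bigr) \to f(v)t,
\end{align*}
matching $\mathrm{Var}(\sqrt{f(v)}\mathbf{B}(t)) = f(v)|t|$. An analogous calculation for $0 < s < t$ gives $\mathrm{Cov}(Z_n(s), Z_n(t)) \to f(v)s$, while for $s < 0 < t$ the summands defining $Z_n(s)$ and $Z_n(t)$ live on disjoint intervals and are therefore uncorrelated. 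This is exactly the covariance kernel of a two-sided Brownian motion scaled by $\sqrt{f(v)}$. The Lindeberg condition is immediate from the uniform boundedness of $\xi_{n,i}$.

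For functional weak convergence on each compact interval $[-M,M]$ of $t$, I would invoke the standard empirical-process argument: the underlying class $\{\mathbf{1}_{(v,v+u]} : u \in [-Mn^{-1/3}, Mn^{-1/3}]\}$ is a Vapnik-Chervonenkis class whose $L^2(F)$-diameter shrinks at rate $n^{-1/6}$, so classical modulus-of-continuity bounds (see e.g.\ Kim and Pollard, 1990, or Chapter 19 of van der Vaart, 1998) deliver the requisite stochastic equicontinuity. Combined with the finite-dimensional convergence, this yields $Z_n \overset{d}{\rightarrow} \sqrt{f(v)}\mathbf{B}$ in the Skorokhod space. The only delicate step is the functional tightness, but this is a well-catalogued consequence of empirical-process theory for half-line indicators and introduces no novel difficulty.
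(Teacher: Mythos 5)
Your proposal is correct and follows essentially the same route as the paper: both write the scaled increment process as a triangular array of centered half-interval indicators, compute the limiting covariance $f(v)(s\wedge t)$ from the local expansion of $F$, verify Lindeberg via boundedness, and obtain tightness from the VC/entropy structure of the indexing class (the paper packages the finite-dimensional convergence and equicontinuity together via Theorem 2.11.1 of van der Vaart and Wellner, and handles $t\le 0$ by a symmetric redefinition, exactly as you do).
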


\begin{proof}[Proof of Lemma \ref{lm:asymp-dist-F}]
    We start with the case $t \in [0,K]$, where $K$ is an arbitrary positive integer. Define
    \begin{align*}
        Z_{ni}(t) \equiv n^{-1/3}\left( \mathbf{1}\{v < V_i \leq v + tn^{-1/3}\} - (F(v+tn^{-1/3})-F(v)) \right), t \geq 0.
    \end{align*}
    We use Theorem 2.11.1 in \cite{wellner1996} to prove the weak convergence of $\sum_{i=1}^n Z_{ni}(t)$. There are three conditions in that theorem to be verified. The first Lindeberg condition can be verified similar to Theorem 6 in \cite{durot2007mathbb}. The third entropy condition, (2.11.2), is verified by Condition (2.5.1) in that book because the index $t \in [0,K]$ is one-dimensional. For the second condition, we have for any $s<t$,
    \begin{align*}
        & \mathbb{E}[(Z_{ni}(t) - Z_{ni}(s))^2] \\
        = & n^{-2/3} \mathbb{E}\left[ \left( \mathbf{1}\{v+sn^{-1/3} < V_i \leq v + tn^{-1/3}\} - (F(v+tn^{-1/3})-F(v+sn^{-1/3})) \right)^2 \right] \\
        = & n^{-2/3} (F(v+tn^{-1/3})-F(v+sn^{-1/3})) - n^{-2/3}(F(v+tn^{-1/3})-F(v+sn^{-1/3}))^2.
    \end{align*}
    By the mean value theorem and the continuity of $f$, we have 
    \begin{align*}
        \sup_{|t-s|<\delta}|F(v+tn^{-1/3})-F(v+sn^{-1/3})| \leq \lvert f \rvert_\infty \delta n^{-1/3}.
    \end{align*}
    Therefore, $\sup_{|t-s|<\delta} \sum_{i=1}^n \mathbb{E}[(Z_{ni}(t) - Z_{ni}(s))^2] = O(\delta)$, which proves the second condition of that theorem. 
    
    Then we need to check the pointwise limit of the covariance function. The covariance is equal to 
    \begin{align*}
        \mathbb{E}\left[ \sum_{i,j=1}^n Z_{ni}(t) Z_{nj}(s) \right] = \mathbb{E}\left[ \sum_{i=1}^n Z_{ni}(t) Z_{ni}(s) \right] = n \mathbb{E}\left[ Z_{ni}(t) Z_{ni}(s) \right].
    \end{align*}
    By the definition of $Z_{ni}$, we have 
    \begin{align*}
        \mathbb{E}[ Z_{ni}(t) Z_{ni}(s) ] & = n^{-2/3} (F(v+(s\wedge t)n^{-1/3})-F(v)) \\
        & - n^{-2/3} (F(v+tn^{-1/3})-F(v+sn^{-1/3}))^2.
    \end{align*}
    Since $F$ is continuously differentiable with derivative $f$, we have 
    \begin{align*}
        F(v+(s \wedge t)n^{-1/3})-F(v) & = f(v) (s \wedge t)n^{-1/3} + o(n^{-1/3}), \\
        (F(v+tn^{-1/3})-F(v+sn^{-1/3}))^2 & = O(n^{-2/3}) = o(n^{-1/3}).
    \end{align*}
    Then $n \mathbb{E}\left[ Z_{ni}(t) Z_{ni}(s) \right] = f(v)(s \wedge t) + o(1)$. Therefore, we know that, on the interval $[0,K]$, $\sum_{i=1}^n Z_{ni}(t)$ weakly converges to a mean zero Gaussian process with covariance function $f(v)(s \wedge t)$. This limit process can be presented as $\sqrt{f(v)} \mathbf{B}(t)$. Since $K$ is arbitrary, the weak convergence can be established on the positive real line $\mathbb{R} = \bigcup_{K=1}^\infty [0,K]$ in view of Theorem 1.6.1 in \cite{wellner1996}. Lastly, we can extend the weak convergence to the entire real line by redefining $Z_{in}(t)$ as 
    \begin{align*}
        Z_{ni}(t) \equiv n^{-1/3}\left( \mathbf{1}\{v + tn^{-1/3} < V_i \leq v \} - (F(v+tn^{-1/3})-F(v)) \right), t \leq 0.
    \end{align*}
    The remaining arguments hold analogously.
\end{proof}

\begin{proof}[Proof of Lemma \ref{lm:asymp-dist-Lambda}]
    By the mean value theorem expressed in (\ref{eqn:mean-value-Lambda}), we have 
    \begin{align*}
        J_n(t) = & n^{2/3} (1-\xi_{1n}(t))^{-2} (F_n(v+tn^{-1/3}) + 1/n - F(v+tn^{-1/3})) \\
        & - n^{2/3} (1-\xi_{2n})^{-2} (F_n(v) + 1/n - F(v)),
    \end{align*}
    where $\xi_{1n}(t)$ is between $F_n(v+tn^{-1/3})$ and $F(v+tn^{-1/3})$, and $\xi_{2n}$ is between $F_n(v)$ and $F(v)$. We decompose $J_n(t)$ into three terms:
    \begin{align*}
        J_n(t) = \widetilde{J}_n(t) + \text{err}_{1n}(t) + \text{err}_{2n},
    \end{align*}
    where we define 
    \begin{align*}
        \widetilde{J}_n(t) \equiv & n^{2/3} (1-F(v))^{-2} \left(F_n(v + tn^{-1/3})  - F(v + tn^{-1/3}) - (F_n(v) - F(v)) \right), \\
        \text{err}_{1n}(t) \equiv & n^{2/3} ((1-\xi_{1n}(t))^{-2} - (1-F(v))^{-2}) (F_n(v+tn^{-1/3}) - F(v+tn^{-1/3})) \\
        & + n^{2/3} (1-\xi_{1n}(t))^{-2}/n, \\
        \text{err}_{2n} \equiv & n^{2/3} ((1-\xi_{2n})^{-2} - (1-F(v))^{-2}) (F_n(v) - F(v)) + n^{2/3} (1-\xi_{2n})^{-2}/n.
    \end{align*}
    By Lemma \ref{lm:asymp-dist-F}, we know that $\widetilde{J}_n(t)$ converges weakly to $\frac{\sqrt{f(v)}}{(1-F(v))^2} \mathbf{B}(t)$. The remaining task is to show that the two error terms are negligible.
    
    The supremum of the first error term, $\sup_{t \in [0,K]} |\text{err}_{1n}(t)|$, is bounded by 
    \begin{align*}
        & n^{2/3} \sup_{t \in [0,K]} |(1-\xi_{1n}(t))^{-2} - (1-F(v))^{-2}| \sup_{t \in [0,K]}|F_n(v+tn^{-1/3}) - F(v+tn^{-1/3})| \\
        + & n^{-1/3} \sup_{t \in [0,K]} |(1-\xi_{1n}(t))^{-2}|.
    \end{align*}
    Based on the uniform convergence rate of the empirical cumulative distribution function \citep[Dvoretzky–Kiefer–Wolfowitz inequality, p. 268,][]{vaart_1998}, we have 
    \begin{align*}
        \sup_{t \in [0,K]}|F_n(v+tn^{-1/3}) - F(v+tn^{-1/3})| \leq \sup_{v \in \mathbb{R}}|F_n(v) - F(v)| = O_p(n^{-1/2}).
    \end{align*}
    Because the function $(1-\cdot)^{-2}$ is monotonic, we know that, for any $t \in [0,K]$,
    \begin{align*}
        & |(1-\xi_{1n}(t))^{-2} - (1-F(v))^{-2}| \\
        \leq &  |(1-F(v+tn^{-1/3}))^{-2} - (1-F(v))^{-2}| \vee |(1-F_n(v+tn^{-1/3}))^{-2} - (1-F(v))^{-2}| \\
        \leq & |(1-F(v+tn^{-1/3}))^{-2} - (1-F(v))^{-2}| + |(1-F_n(v+tn^{-1/3}))^{-2} - (1-F(v))^{-2}| \\
        \leq & 2|(1-F(v+tn^{-1/3}))^{-2} - (1-F(v))^{-2}| \\
        & + |(1-F_n(v+tn^{-1/3}))^{-2} - (1-F(v+tn^{-1/3}))^{-2}|,
    \end{align*}
    where the triangle inequality is used in the last step.
    For the first term on the RHS, as $F$ is non-decreasing and $(1-\cdot)^{-3}$ is increasing, applying Taylor expansion twice give that
    \begin{align*}
        & \sup_{t \in [0,K]} |(1-F(v+tn^{-1/3}))^{-2} - (1-F(v))^{-2}| \\
        \leq & 2 \sup_{t \in [0,K]}(1-F(v+tn^{-1/3}))^{-3} |F(v+tn^{-1/3}) - F(v)| \\
        \leq & 2 (1-F(v+Kn^{-1/3}))^{-3} \sup_{v \in \mathbb{R}} |f(v)| Kn^{-1/3} = O(n^{-1/3}),
    \end{align*}
    where we have utilized the condition that $f$ is bounded (since it is continuous on the compact support) and the fact that the term $F(v+Kn^{-1/3})$ is strictly less than $1$ when $n$ is sufficiently large. For the second term, following the same reasoning as in the proof of Lemma \ref{lm:consistency-Lambda}, we know it is bounded by 
    \begin{align*}
        O_p(1)\times \sup_{t \in [0,K]}|F_n(v+tn^{-1/3}) - F(v+tn^{-1/3})| = O_p(n^{-1/2}).
    \end{align*}
    Therefore, we have 
    \begin{align*}
        &n^{2/3} \sup_{t \in [0,K]} |(1-\xi_{1n}(t))^{-2} - (1-F(v))^{-2}| \sup_{t \in [0,K]}|F_n(v+tn^{-1/3}) - F(v+tn^{-1/3})| \\
        =& n^{2/3} O_p(n^{-5/6}) = o_p(1).
    \end{align*}
    As a byproduct of the above analysis, we have also shown that the term $(1-\xi_{1n}(t))^{-2}$ is bounded in probability. Hence, we have $\sup_{t \in [0,K]} |\text{err}_{1n}(t)| = o_p(1)$. The second error term does not depend on $t$ can be shown to be $o_p(1)$ following the same reasoning.
    
\end{proof}

\begin{proof}[Proof of Theorem \ref{thm:asymp-dist}]
    We follow the proof of Theorem 1 in \cite{luo2018integrated} to derive the asymptotic distribution of $\hat{\lambda}_n$. Let 
    \begin{align*}
        U_n(c) = \argmin_{s \in [a,b]} \{\Lambda_n(s) - cs \}, c>0.
    \end{align*}
    The process $U_n(a)$, first proposes by \cite{groeneboom1983concave}, is a very useful tool in deriving the asymptotic distribution $\hat{\lambda}_n$. For any $v \in [a,b]$, we have the switching relation: $U_n(c) \geq v \iff \hat{\lambda}_n \leq c$. Then by the definition of $U_n$, we have 
    \begin{align*}
        n^{1/3}(\hat{\lambda}_n(v) - \lambda(v)) \leq z & \iff \hat{\lambda}_n(v) \leq \lambda(v) + zn^{-1/3} \\
        & \iff U_n(\lambda(v) + zn^{-1/3}) \geq v \\
        & \iff \argmin_{s \in [a,b]} \{\Lambda_n(s) - (\lambda(v) + zn^{-1/3})s \} \geq v
    \end{align*}
    After changing variable $s = v + tn^{-1/3}$, the above event regarding the minimization can be equivalently written as 
    \begin{align*}
        & \argmin_{s \in [a,b]} \{\Lambda_n(s) - (\lambda(v) + zn^{-1/3})s \} \geq v \\
        \iff & \argmin_{t \in n^{1/3}[a-v,b-v]} \{\Lambda_n(v + tn^{-1/3}) - (\lambda(v) + zn^{-1/3})(v + tn^{-1/3}) \} \geq 0
    \end{align*}
    Since the argmin does not change when we add or multiply constants to the objective function, we have 
    \begin{align*}
        & \argmin_{t \in n^{1/3}[a-v,b-v]} \{\Lambda_n(v + tn^{-1/3}) - (\lambda(v) + zn^{-1/3})(v + tn^{-1/3}) \} \\
        = & \argmin_{t \in n^{1/3}[a-v,b-v]} \{n^{2/3}\Lambda_n(v + tn^{-1/3}) - n^{2/3}\Lambda_n(v) - n^{1/3}\lambda(v) t - zt \},
    \end{align*}
    Define the process $W_n(t)$ by 
    \begin{align*}
        W_n(t) \equiv n^{2/3}\Lambda_n(v + tn^{-1/3}) - n^{2/3}\Lambda_n(v) - n^{1/3}\lambda(v) t.
    \end{align*}
    The above analysis shows that 
    \begin{align} \label{eqn:switching}
        n^{1/3}(\hat{\lambda}_n(v) - \lambda(v)) \leq z \iff \argmin_{t \in n^{1/3}[a-v,b-v]} \{W_n(t) - zt\} \geq 0.
    \end{align}
    To study its asymptotic behavior, we can decompose $W_n(t)$ as 
    \begin{align*}
        W_n(t) & = n^{2/3}(\Lambda_n(v + tn^{-1/3}) - \Lambda_n(v) - (\Lambda(v + tn^{-1/3}) - \Lambda(v))) \\
        & + n^{2/3}(\Lambda(v + tn^{-1/3}) - \Lambda(v) - n^{-1/3}\lambda(v) t).
    \end{align*}
    By Lemma \ref{lm:asymp-dist-Lambda}, we know that the first term weakly converges to $\frac{\sqrt{f(v)}}{(1-F(v))^2} \mathbf{B}(t)$. For the second term, notice that $\lambda = f/(1-F)^2$ is continuously differentiable over $[a,b]$ because $f$ is assume to be continuously differentiable. Then for any $K>0$, we have uniformly over $t \in [-K,K]$, 
    \begin{align*}
        n^{2/3}(\Lambda(v + tn^{-1/3}) - \Lambda(v) - n^{-1/3}\lambda(v) t) = \lambda'(v) t^2/2 + o(1).
    \end{align*}
    In view of Theorem 1.6.1 in \cite{wellner1996}, we have 
    \begin{align*}
        W_n(t) \overset{d}{\rightarrow} \frac{\sqrt{f(v)}}{(1-F(v))^2} \mathbf{B}(t) + \frac{\lambda'(v)}{2} t^2.
    \end{align*}
    By the Argmax Theorem, that is, Theorem 3.2.2 in \cite{wellner1996}, we know that 
    \begin{align*}
        \argmin_{t \in n^{1/3}[a-v,b-v]} \{W_n(t) - zt\} \overset{d}{\rightarrow} \argmin_{t \in \mathbb{R}} \left\{ \alpha \mathbf{B}(t) + \beta t^2 -zt \right\},
    \end{align*}
    where, for simplicity, we denote $\alpha \equiv \frac{\sqrt{f(v)}}{(1-F(v))^2}$ and $\beta \equiv \frac{\lambda'(v)}{2}$. Following the proof of Theorem 1 in \cite{luo2018integrated}, we can show that 
    \begin{align*}
        \argmin_{t \in \mathbb{R}} \left\{ \alpha \mathbf{B}(t) + \beta t^2 -zt \right\} \overset{d}{\sim} \left( \frac{\alpha}{\beta} \right)^{2/3} \argmin_{t \in \mathbb{R}} \left\{ \mathbf{B}(t) + t^2 \right\} + \frac{z}{2\beta}.
    \end{align*}
    From the relationship in (\ref{eqn:switching}), we have 
    \begin{align*}
        \mathbb{P}\left( n^{1/3}(\hat{\lambda}_n(v) - \lambda(v)) \leq z \right) & \rightarrow \mathbb{P} \left( \left( \frac{\alpha}{\beta} \right)^{2/3} \argmin_{t \in \mathbb{R}} \left\{ \mathbf{B}(t) + t^2 \right\} + \frac{z}{2\beta} \geq 0 \right) \\
        & = \mathbb{P} \left( 2\alpha^{2/3}\beta^{1/3} \argmax_{t \in \mathbb{R}} \{ \mathbf{B}(t) - t^2 \} \leq z\right).
    \end{align*}
    Therefore, we have the asymptotic distribution of $\hat{\lambda}_n(v)$ as the following:
    \begin{align*}
        n^{1/3}(\hat{\lambda}_n(v) - \lambda(v)) \overset{d}{\rightarrow} 2\alpha^{2/3}\beta^{1/3} \argmax_{t \in \mathbb{R}} \{ \mathbf{B}(t) - t^2 \}.
    \end{align*}
    Lastly, to derive the asymptotic distribution of $\hat{f}_n(v)$, we have 
    \begin{align*}
        n^{1/3}(\hat{f}_n(v) - f(v)) & = n^{1/3}(\hat{\lambda}_n(v)(1-F_n(v))^2 - \lambda(v)(1-F(v))^2) \\
        & = n^{1/3}(\hat{\lambda}_n(v)(1-F_n(v))^2 - \hat{\lambda}_n(v)(1-F(v))^2) \\
        & + n^{1/3}(\hat{\lambda}_n(v)(1-F(v))^2 - \lambda(v)(1-F(v))^2).
    \end{align*} 
    The first term is $o_p(1)$ because $|\hat{\lambda}_n(v)| = O_p(1)$, and 
    \begin{align*}
        |(1-F_n(v))^2 - (1-F(v))^2| & = |F_n(v) - F(v)| |2 - F_n(v) - F(v)| \\
        & \leq 2|F_n(v) - F(v)| = O_p(n^{-1/2}).
    \end{align*}
    The second term converges in distribution to $2\alpha^{2/3}\beta^{1/3}(1-F(v))^2 \argmax_{t \in \mathbb{R}} \{ \mathbf{B}(t) - t^2 \}$. This proves the result that 
    \begin{align*}
        n^{1/3}(\hat{f}_n(v) - f(v)) \overset{d}{\rightarrow} C(v) \argmax_{t \in \mathbb{R}} \{ \mathbf{B}(t) - t^2 \}.
    \end{align*}

\end{proof}

\begin{proof}[Proof of Theorem \ref{thm:minimax}]
    For simplicity in notation, we choose the support to be $[\underline{v},\bar{v}]=[0,1]$ and $v = 0.5$. The other cases can be analyzed analogously. We use the method described in Chapter 15.2 of \cite{wainwright2019high} to prove the minimax lower bound on the convergence rate. Consider $f(v)$, the evaluation of the density function $f$ at the point $v \in [a,b]$, as an evaluation functional on the set of densities $\mathcal{F}$ defined in Theorem \ref{thm:minimax}. Define $\omega(\epsilon)$ as the modulus of continuity of the evaluation functional (at $v=0.5$) with respect to the Hellinger norm on $\mathcal{F}$, that is,
    \begin{align*}
        \omega(\epsilon) \equiv \sup_{f_1,f_2 \in \mathcal{F}} \{|f_1(0.5) - f_2(0.5)|: H(f_1 \| f_2) \leq \epsilon\},
    \end{align*}
    where the Hellinger norm $H(f_1 \| f_2)$ is given by
    \begin{align*}
        H(f_1 \| f_2)^2 \equiv \int \left(\sqrt{f_1(v)} - \sqrt{f_2(v)} \right)^2 dv.
    \end{align*}
    By Corollary 15.6 (Le Cam for functionals) in Chapter 15.2 of \cite{wainwright2019high}, we know that the minimax risk is lower bounded as
    \begin{align*}
        \inf_{\tilde{f}} \sup_{f \in \mathcal{F}} \mathbb{E}_f |\tilde{f}(v) - f(v)| \geq \frac{1}{8} \omega\left(1/(2\sqrt{n})\right).
    \end{align*}
    Our remaining task is to characterize the modulus of continuity $\omega$. We want to find two density functions in $\mathcal{F}$ that are close when measured with the Hellinger distance, but their evaluation functionals are well-separated. We define the following two densities:
    \begin{align}
        f_1(v) & \equiv 1, v \in [\underline{v},\bar{v}] \nonumber \\
        f_2(v) & \equiv 1 + \delta \phi((v-0.5)/\delta), v \in [\underline{v},\bar{v}] \label{eq:f_2}.
    \end{align}
    The density $f_1$ is the uniform distribution. We add a small perturbation to the uniform density to obtain the density $f_2$. The coefficient $\delta$ depends on $n$ and is specified later in the proof. The perturbation function $\phi$ is defined as 
    \begin{align*}
        \phi(t) \equiv 
        \begin{cases}
        t+1, & t\in[-1,0],\\
        -t+1, & t\in[0,2],\\
        t-3, & t\in[2,3],\\
        0, & \text{ otherwise.}
        \end{cases}
    \end{align*}
    We graph the perturbation function $\phi$ and the perturbed density $f_2$ below.
    \begin{figure}[!htbp]
\caption{Perturbation function and perturbed density.}
\centering \begin{tikzpicture}
				
				\begin{axis}[
					legend style={nodes={scale=0.8, transform shape}},
					axis y line=center, 
					axis x line=bottom,
					ytick={-1,0,1},
					xtick={-1,0,1,2,3},
					xmin=-2,     xmax=4,
					ymin=-1.2,     ymax=1.2,
					]
					\addplot [
					thick,
					domain=-2:-1, 
					samples=100, 
					color=black,
					]
					{0};
					\addplot [
					thick,
					domain=-1:0, 
					samples=100, 
					color=black,
					]
					{x+1};
					\addplot [
					thick,
					domain=0:2, 
					samples=100, 
					color=black,
					]
					{-x+1};
					\addplot [
					thick,
					domain=2:3, 
					samples=100, 
					color=black,
					]
					{x-3};
					\addplot [
					thick,
					domain=3:4, 
					samples=100, 
					color=black,
					]
					{0};
					\addlegendentry{$\phi$};
					
				\end{axis}

			\end{tikzpicture} \begin{tikzpicture}
				
				\begin{axis}[
					legend style={nodes={scale=0.8, transform shape}},
					axis y line=left, 
					axis x line=bottom,
					ytick={0,1},
					xtick={0.5,1},
					xmin=0,     xmax=1.2,
					ymin=0,     ymax=1.4,
					]
					\addplot [
					thick,
					domain=0:7/16, 
					samples=100, 
					color=black,
					]
					{1};
					\addplot [
					thick,
					domain=7/16:1/2, 
					samples=100, 
					color=black,
					]
					{2*x + 2/16};
					\addplot [
					thick,
					domain=1/2:5/8, 
					samples=100, 
					color=black,
					]
					{-2*x + 2 + 2/16};
					\addplot [
					thick,
					domain=5/8:11/16, 
					samples=100, 
					color=black,
					]
					{2*x -6/16};
					\addplot [
					thick,
					domain=11/16:1, 
					samples=100, 
					color=black,
					]
					{1};
					\addlegendentry{$f_2$};
					
					\addplot[dashed] coordinates{(0.5-1/16,0) (0.5-1/16,1)};
					\addplot[dashed] coordinates{(0.5+1/16,0) (0.5+1/16,1)};
					\addplot[] coordinates{(0.5,0.5)} node[] {$2\delta$};
					\addplot[dashed] coordinates{(0.5+3/16,0) (0.5+3/16,1)};
					\addplot[] coordinates{(0.5+2/16,0.5)} node[] {$2\delta$};
					
					\addplot[dashed] coordinates{(0,1+2/16) (1/2 ,1+2/16)};
					\addplot[] coordinates{(0.25,1+1/16)} node[] {$\delta$};

					
				\end{axis}

			\end{tikzpicture} 
\end{figure}

    The density $f_1$ of the uniform distribution is continuous and continuously differentiable. It is well-known (and easy to verify) that the uniform distribution is Myerson regular. Therefore, $f_1 \in \mathcal{F}$. The perturbed density $f_2$, by construction, is continuous and almost everywhere continuously differentiable. The upper Dini derivative of $f_2$ belongs to the set $\{-1,0,1\}$. Therefore, to show that $f_2 \in \mathcal{F}$, we only need to verify that it is Myerson regular. We prove this fact in Lemma \ref{lm:f_2_Myerson}. For these two density functions, the difference between their respective evaluation functionals is $|f_1(0.5)-f_2(0.5)| = \delta$. The Hellinger distance can be bounded as follows. Define the function $\Psi(t)=\sqrt{1+t}$. Its second-order derivative
is bounded when $|t|<1/2$; that is, $\sup_{|t|<1/2}|\Psi''(t)|\leq\frac{\sqrt{2}}{2}$. Since $f_{1}(y)=1$, we have 
\begin{align*}
H(f_{1}\|f_{2})^{2}/2 =1-\int_{0}^{1}\Psi\left(\delta\phi\left(\frac{v-1/2}{\delta}\right)\right)dv =\int_{0}^{1}\Psi(0)-\Psi\left(\delta\phi\left(\frac{v-1/2}{\delta}\right)\right)dv.
\end{align*}
By the second-order Taylor expansion, we have 
\begin{align*}
 \Psi(0)-\Psi\left(\delta\phi\left(\frac{v-1/2}{\delta}\right)\right)
\leq -\Psi'(0)\delta\phi\left(\frac{v-1/2}{\delta}\right)+\frac{\sqrt{2}}{4}\delta^{2}\phi^{2}\left(\frac{v-1/2}{\delta}\right).
\end{align*}
By the construction of $\phi$, we have $\int_{0}^{1}\phi\left(\frac{v-1/2}{\delta}\right)dv=0$. By the change of variables $u=(v-1/2)/\delta$, we have 
\begin{align*}
\int_{0}^{1}\phi^{2}\left(\frac{v-1/2}{\delta}\right)dy=\delta\int_{\mathbb{R}}\phi^{2}\left(u\right)du\leq4\delta\int_{-1}^{0}(t+1)^{2}dt=\frac{4}{3}\delta.
\end{align*}
Combining these results together, we obtain a bound on the Hellinger
distance: 
\begin{align*}
H(f_{1}\|f_{2})^{2}\leq\frac{2\sqrt{2}}{3}\delta^{3}.
\end{align*}
Now we set can $\delta = (\frac{3}{8\sqrt{2}n})^{1/3}$, which guarantees that $H(f_{1}\|f_{2})^{2}\leq1/(4n)$. This implies that $\omega\left(1/(2\sqrt{n})\right) \geq (\frac{3}{8\sqrt{2}n})^{1/3}$. Therefore, we obtain the following minimax lower bound:
\begin{align*}
        \inf_{\tilde{f}} \sup_{f \in \mathcal{F}} \mathbb{E}_f |\tilde{f}(v) - f(v)| \geq \frac{1}{8} \omega\left(1/(2\sqrt{n})\right) \geq \frac{1}{8} \left(\frac{3}{8\sqrt{2}} \right)^{1/3} n^{-1/3}.
    \end{align*}
    
\end{proof}

\begin{lemma} \label{lm:f_2_Myerson}
    For $\delta>0$ sufficiently small, the density function $f_2$ defined by (\ref{eq:f_2}) is Myerson regular. 
\end{lemma}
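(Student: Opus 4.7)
The plan is to invoke Proposition \ref{lm:equiv-myerson-regular} and reduce the problem to showing that $\Lambda_2(v) \equiv (1 - F_2(v))^{-1}$ is convex on $[0,1]$. Since $f_2$ is continuous (so $F_2$ is $C^1$) and $1-F_2(v) > 0$ on $[0,1)$, it suffices to verify that $\lambda_2(v) = f_2(v)/(1-F_2(v))^2$, the derivative of $\Lambda_2$, is nondecreasing. Because $\lambda_2$ is continuous on $[0,1)$, it is enough to show that $\lambda_2'(v) \geq 0$ on each of the five open subintervals on which $f_2$ is affine:
\begin{align*}
(0,\tfrac{1}{2}-\delta),\quad (\tfrac{1}{2}-\delta,\tfrac{1}{2}),\quad (\tfrac{1}{2},\tfrac{1}{2}+2\delta),\quad (\tfrac{1}{2}+2\delta,\tfrac{1}{2}+3\delta),\quad (\tfrac{1}{2}+3\delta,1).
\end{align*}

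On each such piece, a direct differentiation gives
\begin{align*}
\lambda_2'(v) \;=\; \frac{f_2'(v)\,(1-F_2(v)) + 2\,f_2(v)^2}{(1-F_2(v))^3},
\end{align*}
so the sign of $\lambda_2'$ is that of $N(v) \equiv f_2'(v)(1-F_2(v)) + 2 f_2(v)^2$. First I would note that on the two plateau intervals $(0,\tfrac{1}{2}-\delta)$ and $(\tfrac{1}{2}+3\delta,1)$ one has $f_2 \equiv 1$, $f_2' \equiv 0$, and $F_2(v) = v$ (since $\int_{-1}^3 \phi = 0$ by a short computation, the perturbation integrates to zero), giving $N(v) = 2 > 0$. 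Next, on the two intervals where $\phi$ has slope $+1$, namely $(\tfrac{1}{2}-\delta,\tfrac{1}{2})$ and $(\tfrac{1}{2}+2\delta,\tfrac{1}{2}+3\delta)$, we have $f_2'(v) = 1 \geq 0$ and $f_2(v) \geq 1-\delta > 0$, so $N(v) \geq 2(1-\delta)^2 > 0$ for small $\delta$.

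The one nontrivial piece is $(\tfrac{1}{2}, \tfrac{1}{2}+2\delta)$, where $f_2' = -1$ and $N(v) = -(1-F_2(v)) + 2 f_2(v)^2$. Here I would use the uniform bounds $1 - F_2(v) \leq 1 - F_2(\tfrac{1}{2}) = \tfrac{1}{2} - \tfrac{\delta^2}{2}$ (the latter following from an explicit integration of $\delta\phi((\cdot - 1/2)/\delta)$ on $[0,1/2]$) together with $f_2(v) \geq 1 - \delta$, which yield
\begin{align*}
N(v) \;\geq\; 2(1-\delta)^2 - \bigl(\tfrac{1}{2} - \tfrac{\delta^2}{2}\bigr) \;=\; \tfrac{3}{2} + O(\delta) \;>\; 0
\end{align*}
for all sufficiently small $\delta$. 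This is the only place where the smallness of $\delta$ is actually used, and it is the step I expect to be the main (though mild) obstacle: one must verify that the decreasing piece of $f_2$ does not overwhelm the strictly positive term $2 f_2^2$ in $N$, which is true precisely because the hazard denominator $1-F_2$ stays bounded by roughly $1/2$ in a neighborhood of $v=1/2$. Combining the five cases gives $\lambda_2' \geq 0$ wherever defined, hence $\lambda_2$ is nondecreasing and $\Lambda_2$ is convex; by Proposition \ref{lm:equiv-myerson-regular}, $f_2$ is Myerson regular.
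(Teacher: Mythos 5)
Your proposal is correct and follows essentially the same route as the paper: reduce Myerson regularity to convexity of $(1-F_2)^{-1}$ via Proposition \ref{lm:equiv-myerson-regular}, and verify piecewise nonnegativity of the numerator $2f_2^2+(1-F_2)f_2'$ of $\lambda_2'$ across the five affine pieces of $f_2$. The only (immaterial) difference is that on the decreasing piece you bound this quantity below using $f_2\geq 1-\delta$ and $1-F_2\leq \tfrac12-\tfrac{\delta^2}{2}$, whereas the paper computes it in closed form on each interval; both yield strict positivity for small $\delta$.
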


\begin{proof}[Proof of Lemma \ref{lm:f_2_Myerson}]
    The density function $f_2$ can be written as a piecewise function:
    \begin{align*}
    f_2(v)=
    \begin{cases}
    1, & \text{ if }v\in[0,1/2-\delta) \cup [1/2+3\delta,1],\\
    v+\frac{1}{2}+\delta, & \text{ if }v\in[1/2-\delta,1/2),\\
    -v+\frac{3}{2}+\delta, & \text{ if }v\in[1/2,1/2+2\delta),\\
    v+\frac{1}{2}-3\delta, & \text{ if }v\in[1/2+2\delta,1/2+3\delta).
    \end{cases}
    \end{align*}
    We want to show that the function $(1-F_2)^{-1}$ is convex, where $F_2$ is the cumulative distribution function of $f_2$. Because the derivative of $(1-F_2)^{-1}$ is continuous and piecewise differentiable, we just need to show that the second-order derivative of $(1-F_2)^{-1}$ is piecewise nonnegative. This is equivalent to checking that the function $\psi \equiv 2f_2^2 + (1-F_2)f'_2$, which is the numerator of the second-order derivative of $(1-F_2)^{-1}$, is piecewise nonnegative.
    
    We examine one by one the four regions in the piecewise definition of $f_2$. On the first region $[0,1/2-\delta) \cup [1/2+3\delta,1]$, the density $f_2$ is equal to the density of the uniform distribution and therefore is Myerson regular. By elementary calculations, on the remaining three intervals, the functions $F_2$ and $\varphi$ are equal to
    \begin{align*}
        F_2(v) = 
        \begin{cases}
        v^2/2 + (1/2 + \delta)v + (\delta - 1/2)^2/2, & v \in [1/2-\delta,1/2), \\
        -v^2/2 + (3/2 + \delta)v + (\delta^2 - \delta - 1/4)/2, & v \in [1/2,1/2+2\delta), \\
        v^2/2 + (1/2 - 3\delta)v + (-3\delta + 1/2)^2/2, & v \in [1/2+2\delta,1/2+3\delta),
        \end{cases}
    \end{align*}and
    \begin{align*}
        \psi(v) = 
        \begin{cases}
        3(v+1/2 +\delta)^2/2 + 1+\delta, & v \in [1/2-\delta,1/2), \\
        3(-v+3/2 +\delta)^2/2 + \delta(1+\delta), & v \in [1/2,1/2+2\delta), \\
        3(v- 1/2 +3\delta)^2/2 + 1-3\delta, & v \in [1/2+2\delta,1/2+3\delta),
        \end{cases}
    \end{align*}
    respectively. Therefore, the function $\varphi$ is bounded away from zero when $\delta < 1/3$. This proves that $(1-F_2)^{-1}$ is convex. Hence, $f_2$ is Myerson regular.
\end{proof}

\bibliographystyle{chicago}
\bibliography{references.bib}

\begin{thebibliography}{}

\bibitem[\protect\citeauthoryear{Balabdaoui, Jankowski, Pavlides, Seregin, and
  Wellner}{Balabdaoui et~al.}{2011}]{balabdaoui2011grenander}
Balabdaoui, F., H.~Jankowski, M.~Pavlides, A.~Seregin, and J.~Wellner (2011).
\newblock On the grenander estimator at zero.
\newblock {\em Statistica Sinica\/}~{\em 21\/}(2), 873.

\bibitem[\protect\citeauthoryear{Beare and Fang}{Beare and
  Fang}{2017}]{beare2017weak}
Beare, B.~K. and Z.~Fang (2017).
\newblock Weak convergence of the least concave majorant of estimators for a
  concave distribution function.
\newblock {\em Electronic Journal of Statistics\/}~{\em 11\/}(2), 3841--3870.

\bibitem[\protect\citeauthoryear{Bulow and Klemperer}{Bulow and
  Klemperer}{1996}]{bulow1996auctions}
Bulow, J. and P.~Klemperer (1996).
\newblock Auctions versus negotiations.
\newblock {\em The American Economic Review\/}~{\em 86\/}(1), 180--194.

\bibitem[\protect\citeauthoryear{Chernoff}{Chernoff}{1964}]{chernoff1964estimation}
Chernoff, H. (1964).
\newblock Estimation of the mode.
\newblock {\em Annals of the Institute of Statistical Mathematics\/}~{\em
  16\/}(1), 31--41.

\bibitem[\protect\citeauthoryear{Durot}{Durot}{2007}]{durot2007mathbb}
Durot, C. (2007).
\newblock On the $\mathbb{L}_p$-error of monotonicity constrained estimators.
\newblock {\em The Annals of Statistics\/}~{\em 35\/}(3), 1080--1104.

\bibitem[\protect\citeauthoryear{Durot, Kulikov, and Lopuha{\"a}}{Durot
  et~al.}{2012}]{durot2012limit}
Durot, C., V.~N. Kulikov, and H.~P. Lopuha{\"a} (2012).
\newblock The limit distribution of the $l_\infty$-error of grenander-type
  estimators.
\newblock {\em The Annals of Statistics\/}~{\em 40\/}(3), 1578--1608.

\bibitem[\protect\citeauthoryear{Dykstra and Carolan}{Dykstra and
  Carolan}{1999}]{dykstra1999distribution}
Dykstra, R. and C.~Carolan (1999).
\newblock The distribution of the argmax of two-sided brownian motion with
  quadratic drift.
\newblock {\em Journal of Statistical Computation and Simulation\/}~{\em
  63\/}(1), 47--58.

\bibitem[\protect\citeauthoryear{Ewerhart}{Ewerhart}{2013}]{ewerhart2013regular}
Ewerhart, C. (2013).
\newblock Regular type distributions in mechanism design and $\rho-$concavity.
\newblock {\em Economic Theory\/}~{\em 53\/}(3), 591--603.

\bibitem[\protect\citeauthoryear{Fang and Li}{Fang and
  Li}{2017}]{fang2017nonparametric}
Fang, R. and X.~Li (2017).
\newblock Nonparametric tests for strictly increasing virtual valuations.
\newblock {\em Journal of Applied Statistics\/}~{\em 44\/}(6), 1122--1136.

\bibitem[\protect\citeauthoryear{Grenander}{Grenander}{1956}]{grenander1956theory}
Grenander, U. (1956).
\newblock On the theory of mortality measurement: part ii.
\newblock {\em Scandinavian Actuarial Journal\/}~{\em 1956\/}(2), 125--153.

\bibitem[\protect\citeauthoryear{Groeneboom}{Groeneboom}{1983}]{groeneboom1983concave}
Groeneboom, P. (1983).
\newblock The concave majorant of brownian motion.
\newblock {\em The Annals of Probability\/}, 1016--1027.

\bibitem[\protect\citeauthoryear{Groeneboom and Jongbloed}{Groeneboom and
  Jongbloed}{2014}]{groeneboom_jongbloed_2014}
Groeneboom, P. and G.~Jongbloed (2014).
\newblock {\em Nonparametric Estimation under Shape Constraints: Estimators,
  Algorithms and Asymptotics}.
\newblock Cambridge Series in Statistical and Probabilistic Mathematics.
  Cambridge University Press.

\bibitem[\protect\citeauthoryear{Groeneboom and Wellner}{Groeneboom and
  Wellner}{2001}]{groeneboom2001computing}
Groeneboom, P. and J.~A. Wellner (2001).
\newblock Computing chernoff's distribution.
\newblock {\em Journal of Computational and Graphical Statistics\/}~{\em
  10\/}(2), 388--400.

\bibitem[\protect\citeauthoryear{Henderson, List, Millimet, Parmeter, and
  Price}{Henderson et~al.}{2012}]{HENDERSON2012empirical}
Henderson, D.~J., J.~A. List, D.~L. Millimet, C.~F. Parmeter, and M.~K. Price
  (2012).
\newblock Empirical implementation of nonparametric first-price auction models.
\newblock {\em Journal of Econometrics\/}~{\em 168\/}(1), 17--28.
\newblock The Econometrics of Auctions and Games.

\bibitem[\protect\citeauthoryear{Kulikov and Lopuha{\"a}}{Kulikov and
  Lopuha{\"a}}{2006}]{kulikov2006behavior}
Kulikov, V.~N. and H.~P. Lopuha{\"a} (2006).
\newblock The behavior of the npmle of a decreasing density near the boundaries
  of the support.
\newblock {\em The Annals of Statistics\/}~{\em 34\/}(2), 742--768.

\bibitem[\protect\citeauthoryear{Luo and Wan}{Luo and
  Wan}{2018}]{luo2018integrated}
Luo, Y. and Y.~Wan (2018).
\newblock Integrated-quantile-based estimation for first-price auction models.
\newblock {\em Journal of Business \& Economic Statistics\/}~{\em 36\/}(1),
  173--180.

\bibitem[\protect\citeauthoryear{Ma, Marmer, Shneyerov, and Xu}{Ma
  et~al.}{2021}]{ma2021monotonicity}
Ma, J., V.~Marmer, A.~Shneyerov, and P.~Xu (2021).
\newblock Monotonicity-constrained nonparametric estimation and inference for
  first-price auctions.
\newblock {\em Econometric Reviews\/}~{\em 40\/}(10), 944--982.

\bibitem[\protect\citeauthoryear{Marshall and Proschan}{Marshall and
  Proschan}{1965}]{marshall1965maximum}
Marshall, A.~W. and F.~Proschan (1965).
\newblock Maximum likelihood estimation for distributions with monotone failure
  rate.
\newblock {\em The annals of mathematical statistics\/}~{\em 36\/}(1), 69--77.

\bibitem[\protect\citeauthoryear{McAfee and McMillan}{McAfee and
  McMillan}{1987}]{mcafee1987auctions}
McAfee, R.~P. and J.~McMillan (1987).
\newblock Auctions and bidding.
\newblock {\em Journal of Economic Literature\/}~{\em 25\/}(2), 699--738.

\bibitem[\protect\citeauthoryear{Myerson}{Myerson}{1981}]{myerson1981optimal}
Myerson, R.~B. (1981).
\newblock Optimal auction design.
\newblock {\em Mathematics of operations research\/}~{\em 6\/}(1), 58--73.

\bibitem[\protect\citeauthoryear{Pinkse and Schurter}{Pinkse and
  Schurter}{2019}]{pinkse2019estimation}
Pinkse, J. and K.~Schurter (2019).
\newblock Estimation of auction models with shape restrictions.
\newblock {\em arXiv preprint arXiv:1912.07466\/}.

\bibitem[\protect\citeauthoryear{Rao}{Rao}{1970}]{rao1970estimation}
Rao, B.~P. (1970).
\newblock Estimation for distributions with monotone failure rate.
\newblock {\em The annals of mathematical statistics\/}, 507--519.

\bibitem[\protect\citeauthoryear{Robertson and Wright}{Robertson and
  Wright}{1975}]{robertson1975consistency}
Robertson, T. and F.~Wright (1975).
\newblock Consistency in generalized isotonic regression.
\newblock {\em The Annals of Statistics\/}, 350--362.

\bibitem[\protect\citeauthoryear{Stone}{Stone}{1980}]{stone1980optimal}
Stone, C.~J. (1980).
\newblock Optimal rates of convergence for nonparametric estimators.
\newblock {\em The Annals of Statistics\/}~{\em 8\/}(6), 1348--1360.

\bibitem[\protect\citeauthoryear{Szech}{Szech}{2011}]{SZECH2011optimal}
Szech, N. (2011).
\newblock Optimal advertising of auctions.
\newblock {\em Journal of Economic Theory\/}~{\em 146\/}(6), 2596--2607.

\bibitem[\protect\citeauthoryear{Vaart}{Vaart}{1998}]{vaart_1998}
Vaart, A. W. v.~d. (1998).
\newblock {\em Asymptotic Statistics}.
\newblock Cambridge Series in Statistical and Probabilistic Mathematics.
  Cambridge University Press.

\bibitem[\protect\citeauthoryear{van~der Vaart and Wellner}{van~der Vaart and
  Wellner}{1996}]{wellner1996}
van~der Vaart, A.~W. and J.~A. Wellner (1996).
\newblock {\em Weak Convergence and Empirical Processes}.
\newblock New York, NY: Springer.

\bibitem[\protect\citeauthoryear{Wainwright}{Wainwright}{2019}]{wainwright2019high}
Wainwright, M.~J. (2019).
\newblock {\em High-dimensional statistics: A non-asymptotic viewpoint},
  Volume~48.
\newblock Cambridge University Press.

\bibitem[\protect\citeauthoryear{Westling and Carone}{Westling and
  Carone}{2020}]{westling2020unified}
Westling, T. and M.~Carone (2020).
\newblock A unified study of nonparametric inference for monotone functions.
\newblock {\em The Annals of Statistics\/}~{\em 48\/}(2), 1001--1024.

\bibitem[\protect\citeauthoryear{Woodroofe and Sun}{Woodroofe and
  Sun}{1993}]{woodroofe1993penalized}
Woodroofe, M. and J.~Sun (1993).
\newblock A penalized maximum likelihood estimate of f (0+) when f is
  non-increasing.
\newblock {\em Statistica Sinica\/}, 501--515.

\end{thebibliography}

\end{document}